\theoremstyle{plain}
\newtheorem{theorem}{Theorem}
\newtheorem*{theorem*}{Theorem}
\newtheorem{proposition}[theorem]{Proposition}
\newtheorem{corollary}[theorem]{Corollary}
\newtheorem*{corollary*}{Corollary}
\newtheorem{lemma}[theorem]{Lemma}
\theoremstyle{remark}
\newtheorem{remark}[theorem]{Remark}
\newtheorem{example}[theorem]{Example}
\theoremstyle{definition}
\newtheorem{definition}[theorem]{Definition}
\newcommand{\field}[1][]{\ensuremath{\mathbb{F}_{#1}}}
\newcommand{\bfield}{\field[]}
\newcommand{\Kfield}{\mathbb{K}}
\newcommand{\trace}{\operatorname{Tr}}
\newcommand{\Aut}[2][\bfield]{\operatorname{Aut}_{#1}(#2)}
\newcommand{\Der}[3][\bfield]{\operatorname{Der}_{#1}^{#3}(#2)}
\begin{document}

\title{Ideal codes over separable ring extensions.\thanks{Research partially supported by grant MTM2010-20940-C02-01 from the Ministerio de Ciencia e Innovaci\'{o}n of the Spanish Government and from FEDER, and by grant mP-TIC-14 (2014) from CEI-BioTic Granada. \newline
This paper was presented in part as a poster at ISSAC 2014.}
}

\author{Jos\'{e} G\'{o}mez-Torrecillas \and
        F. J. Lobillo \and
        Gabriel Navarro
}

\date{\today}

\maketitle

\begin{abstract}
This paper investigates the application of the theoretical algebraic notion of a separable ring extension, in the realm of cyclic convolutional codes or, more generally, ideal codes. We work under very mild conditions, that cover all previously known as well as new non trivial examples. It is proved that ideal codes are direct summands as left ideals of the underlying non-commutative algebra, in analogy with cyclic block codes. This implies, in particular, that they are generated by an idempotent element. Hence, by using a suitable separability element, we design an efficient algorithm for computing one of such idempotents. 
\end{abstract}

\section{Introduction}

Most of the codes used in engineering support a vector space structure (linear block codes) or become a direct summand of a free module over a polynomial ring (convolutional codes). In the linear case, the benefits are amplified if we also consider cyclicity, since the vector space is also endowed with an algebra structure and cyclic codes come to be ideals. Over convolutional codes, this notion requires something more sophisticated than a straightforward extension of the definition of cyclic block code. Piret, in \cite{Piret:1976}, shows that the classical notion of cyclicity does not produce non-block codes in the convolutional setting and he proposes to deal with skew polynomials (see also \cite{Roos:1979}). Unluckily, with the loss of commutativity on the ring, the working algebra presents theoretical problems that have hindered the study, and potencial practical applications, of cyclic convolutional codes. These codes are reconsidered by Gluesing-Luerssen and Schmale in \cite{GluesingSchmale:2004}, where it is proven that they are principal left ideals of an Ore extension of the form \(A[z;\sigma]\), where \(A = \bfield[x]/(x^n-1)\), $\bfield$ is a finite field, and $\sigma$ is an $\bfield$--automorphism of $A$. Note that $A$ is a semisimple commutative algebra over the finite field $\bfield$, since its characteristic is assumed to be coprime with $n$. In fact, there is always an idempotent generator of the code, which extends a well-known property of cyclic block codes. This result has been extended to any commutative finite semisimple algebra $A$ in \cite{Lopez/Szabo:2013}. When the algebra $A$ is not assumed to be commutative, there are also some positive results: In \cite{Estradaetal:2008} and \cite{Lopez/Szabo:2013} it is shown that convolutional codes which are left ideals of \(A[z;\sigma]\), where \(A\) is a semisimple group algebra, are also generated by idempotents, under suitable conditions on the automorphism $\sigma$. However, as observed in \cite{Lopez/Szabo:2013}, in general, it is not known if convolutional codes with this kind of additional algebraic structure are principal when $A$ is non-commutative. In this paper we aim to continue on this way in order to get a better understanding of $\sigma$-cyclic convolutional codes as well as extend the examples collected by the theory of ideal codes. 

We observe that a property shared by all the aforementioned cases is that the ring extension $\bfield [z] \subseteq A[z;\sigma]$ is separable in the sense of \cite{Hirata/Sugano:1966}. This is a generalization of the concept of separable algebra characterized by the existence of a separability element. With such an element at hand we derive in a constructive way that every convolutional code which is a left ideal of $A[z;\sigma]$ is generated by an idempotent, and we design an algorithm for computing it. Our method rests on the availability of a separability element of the extension $\bfield[z] \subseteq A[z;\sigma]$. We thus devote some efforts to describe explicitly these separability elements in a wide variety of situations which, in particular, include the cases studied in \cite{GluesingSchmale:2004,Estradaetal:2008,Lopez/Szabo:2013}. The use of the condition of separability allows us to work effectively with non-trivial extensions of the examples known until now, see for instance Example \ref{2x2matidemp}. This also illustrates how abstract mathematical results are applicable beyond the theoretical framework in which they were conceived.

The paper is structured as follows. In Section \ref{sec:2} we shall recall the notion of a separable extension of rings $S\subseteq R$ and the existing relation between $S$--direct summands and $R$--direct summands. Actually, given a left ideal $I$ of $R$ that it is an $S$--direct summand of $R$, we describe explicitly an idempotent $e$ of $R$ such that $I = Re$ (Corollary \ref{cor:SRsplit}). This description makes use of a separability element of the ring extension $S \subseteq R$. Hence, we prove that the separability element $p$ of a ring extension $B\subseteq A$ may be lifted to an extension of Ore polynomial rings, say $B[z;\sigma|_B]\subseteq A[z;\sigma]$, if $p$ is fixed under the action of the extension of $\sigma$ to $A \otimes_B A$ (Theorem \ref{separabletoOre}). Special attention is paid to the construction of such an invariant element $p$ in the case of separable algebras over a field (Theorem \ref{separableautos}). Section \ref{sec:idealcode} deals with ideal codes, so we consider a separable extension of the form $\bfield[z] \subseteq A[z;\sigma]$, where $A$ is a finite semisimple algebra over a finite field $\bfield$. Hence, ideal codes over $A[z;\sigma]$ are direct summands as left ideals (Proposition \ref{idealsumadirecta}). Theorems \ref{separabletoOre} and \ref{separableautos} supply a rich variety of separable ring extensions of the form $\bfield[z] \subseteq A[z;\sigma]$ and, henceforth, of ideal codes generated by idempotents. In particular, we cover group convolution codes (Proposition \ref{groupdirectsummand}), the case where $A$ is commutative (Proposition \ref{Aconm}), and a wide class of extensions where $A$ is a direct sum of matrix algebras (Theorem \ref{simplest}). 

Finally, in Section \ref{algoritmo}, we highlight the prominent role that the existence of this separability element plays in this theory and we design an algorithm for computing an idempotent generator of a given ideal code. All along the paper we complement the theoretical results with several examples showing its practical applicability. Elements in finite fields, except \(0\) and \(1\), are represented as powers of a primitive element.

A basic reference for the general theory of non-commutative rings and their (bi)modules is \cite{Anderson/Fuller:1992}, and for finite fields the reader is referred to \cite{Lidl/Niederreiter:1997}. A good monograph on the theory of separable algebras is \cite{DeMeyer/Ingraham:1971}, while for separable ring extensions we refer to \cite{Hirata/Sugano:1966}.

\section{Separability and Ore extensions}\label{sec:2}

Let $R, S, T$ be unital (possibly non-commutative) rings, $M$ an $R-S$--bimodule, and $N$ an $S-T$--bimodule. Then its tensor product $M \otimes_S N$ becomes an $R-T$--bimodule in the usual way. If there is no confusion, we will write $\sum_i m_i \otimes n_i \in M \otimes_S N$ instead of $\sum_i m_i \otimes_S n_i$. For a homomorphism of unital rings $\rho : S \to R$, we consider the canonical $S$--bimodule $R$ with actions $s r = \rho(s)r, rs = r\rho(s)$, for all $r \in R$ and $s \in S$. Clearly, $R$ becomes both an $R-S$--bimodule and an $S-R$--bimodule. 

\subsection{Separability and generating idempotents.}

 Let $M, N$ denote either left modules or bimodules over some ring. 
Let $f : M \to N$ and $g : N \to M$ be homomorphisms of (bi)modules such that $f \circ g = id_N$. Then $f$ is said to be a split epimorphism of (bi)modules and $g$ is said to be a split monomorphism of (bi)modules. For any submodule $N$ of a module $M$, we have the canonical exact sequence 
\begin{equation}\label{eq:sec1}
\xymatrix{0 \ar[r] & N \ar[r] & M \ar^-{\pi}[r] & M/N \ar[r] & 0,}
\end{equation}
where $\pi : M \to M/N$ denotes the canonical projection given by $\pi (m) = m + N$ for all $m \in M$. 
Then $N$ is a direct summand (as a module) of $M$ if and only if $\pi$ is a split epimorphism (see \cite[Proposition 5.2]{Anderson/Fuller:1992}). In this case, if $\iota : M/N \to M$ is the splitting monomorphism of $\pi$, that is, $\pi \circ \iota = id_{M/N}$, then a supplement of $N$ in $M$ is obtained as the image of $\iota$ (see \cite[Lemma 5.1]{Anderson/Fuller:1992}). We will represent this situation by the diagram
\[
\xymatrix{0 \ar[r] & N \ar[r] & M \ar^-{\pi}[r] & M/N \ar[r] \ar@/^/[l]^{\iota} & 0.}
\]

The following generalization of the classical notion of a separable algebra over a commutative ring is a key conceptual tool in this paper. We need this generalization because, even though that many of the rings we are interested in are extensions of a polynomial commutative ring, this subring will not be central. 

\begin{definition}[{\cite[Definition 2]{Hirata/Sugano:1966}}]
A homomorphism of rings $\rho : S \to R$ is said to be a \emph{separable ring extension} if the multiplication map $\mu : R \otimes_S R \to R$, that maps $r \otimes r'$ onto $rr'$, is a split epimorphism of $R$--bimodules. Equivalently, there exists an element $p = \sum_i a_i \otimes b_i \in R \otimes_S R$ (called a \emph{separability element}) such that $rp = pr$ for all $r \in R$ and $\mu(p) = 1$, that is, for all $r \in R$, 
\begin{equation}\label{separability1}
\sum_i r a_i \otimes b_i = \sum_i a_i \otimes b_i r,
\end{equation}
and 
\begin{equation}\label{separability2}
\mu({\textstyle\sum_i a_i \otimes b_i}) = \sum_i a_i b_i = 1.
\end{equation}
\end{definition}

\begin{remark}
If $\rho : S \to R$ is a separable ring extension, then $\rho$ is injective. Indeed, if $\rho (s) = 0$ for $s \in S$, then 
$s = s\mu(p) = \mu (sp) = \mu(0) = 0$. Thus, a separable ring extension is often denoted by $S \subseteq R$, even though that $\rho$ needs not to be an inclusion of sets.
\end{remark}

Let $S \subseteq R$ be a separable ring extension. If $N$ is an $R$--submodule of a left $R$--module $M$ such that $N$ is an $S$--direct summand, then $N$ is an $R$--direct summand (see \cite[Theorem 1.1(c), Definition 1, Proposition 2.6]{Hirata/Sugano:1966}). As a consequence, we deduce the following corollary, that becomes a key tool in this work. We include a proof because we will need the explicit expression of the homomorphism $\beta$ below. 

\begin{corollary}\label{cor:SRsplit}
Let $S \subseteq R$ be a separable ring extension with separability element $p = \sum_i a_i \otimes b_i \in R \otimes_S R$. Consider a left ideal $I$ of $R$ which is an $S$--direct summand of $R$ with $S$--split exact sequence 
\begin{equation}\label{canonicalshortsequence}
\xymatrix{0 \ar[r] & I \ar[r] & R \ar[r]^{\pi} & R/I \ar[r] \ar@/^/[l]^{\iota} & 0}.
\end{equation}
Then $I$ is an $R$--direct summand of $R$ with $R$--split exact sequence 
\begin{equation}\label{canonicalshortsequence2}
\xymatrix{0 \ar[r] & I \ar[r] & R \ar[r]^{\pi } & R/I \ar[r] \ar@/^/[l]^{\beta} & 0},
\end{equation}
where 
\begin{equation}\label{beta}
\beta (r + I) = \sum_ia_i\iota(b_ir+I) 
\end{equation}
 for every $r + I \in R/I$. Therefore, $I = Re$, where $e \in R$ is the idempotent $e = 1-f$, with 
 \[
 f = \beta(1+I) = \sum_ia_i\iota(b_i + I).
 \]
\end{corollary}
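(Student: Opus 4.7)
The plan is to verify directly that formula (\ref{beta}) defines a left $R$-module section of $\pi$, from which the idempotent description then follows by a short formal argument. First, $\beta$ is well defined on cosets because $I$ is a left ideal: if $r - r' \in I$ then $b_i(r-r') \in I$, so $b_i r + I = b_i r' + I$ for each $i$.

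The main obstacle is proving that $\beta$ is left $R$-linear and not merely left $S$-linear; since $\iota$ is only an $S$-module splitting, the extra linearity has to be forced by the separability element. To extract it, for each fixed $r' \in R$ I would introduce the auxiliary map
\[
\Phi_{r'}: R \otimes_S R \longrightarrow R, \qquad \Phi_{r'}(a \otimes b) = a\,\iota(br' + I).
\]
A short check, using the $S$-linearity of $\iota$ and the $S$-balance $a\rho(s) \otimes b = a \otimes \rho(s)b$ in the tensor product, shows $\Phi_{r'}$ is well defined. Evaluating it on both sides of the centrality identity (\ref{separability1}), $\sum_i r a_i \otimes b_i = \sum_i a_i \otimes b_i r$, gives
\[
r\,\beta(r' + I) = \sum_i r a_i\, \iota(b_i r' + I) = \sum_i a_i\, \iota(b_i r r' + I) = \beta(rr' + I),
\]
which is the desired $R$-linearity. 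For the splitting identity, since $\pi$ is $R$-linear and $\pi \circ \iota = id_{R/I}$, a one-line computation using the normalization (\ref{separability2}) yields $\pi(\beta(r+I)) = (\sum_i a_i b_i)r + I = r + I$.

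Once $\beta$ is an $R$-linear section of $\pi$, the rest is standard. One obtains $R = I \oplus \mathrm{Im}(\beta)$, and $R$-linearity gives $\beta(r+I) = rf$ with $f = \beta(1+I)$, so $\mathrm{Im}(\beta) = Rf$. The decomposition $1 = (1-f) + f$ places $e := 1 - f$ in $I$, because $\pi(1-f) = 0$. Since $\pi(f) = 1+I$, applying $\beta$ to the equality $f + I = 1 + I$ yields $f^2 = \beta(f+I) = \beta(1+I) = f$, so $e$ is idempotent. Finally, any $x \in I$ satisfies $xf = \beta(x+I) = \beta(0+I) = 0$, so $x = xe \in Re$, whence $I = Re$.
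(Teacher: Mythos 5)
Your proof is correct, and it reaches the paper's conclusion by a more hands-on route. The section $\beta$ you verify is exactly the one the paper constructs, but the paper obtains it abstractly as the composite $\mu\circ(R\otimes\iota)\circ\alpha$, where $\alpha(m)=\sum_i a_i\otimes b_i m$ and $\mu$ is the action map: it tensors the $S$-split sequence by $R\otimes_S(-)$, invokes $\mu\circ\alpha=\mathrm{id}$ from \cite[Proposition 2.6]{Hirata/Sugano:1966}, and lets the $R$-linearity of $\beta$ be inherited from that of the three constituent maps. You instead verify everything at the level of elements; your auxiliary map $\Phi_{r'}$, which factors through $R\otimes_S R$ because $\iota$ is $S$-linear, is precisely the device that converts the centrality relation $rp=pr$ of \eqref{separability1} into the identity $r\beta(r'+I)=\beta(rr'+I)$, and it plays the role of the $R$-bimodule linearity of $\alpha$ in the paper's diagram. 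The trade-off is clear: the paper's argument is shorter once the tensor-functor machinery is granted and applies verbatim to an $S$-direct summand of any left $R$-module $M$, not just $M=R$; yours is self-contained, avoids the citation, and makes explicit where each defining property of the separability element enters ($rp=pr$ for $R$-linearity, $\mu(p)=1$ via \eqref{separability2} for the splitting). Your concluding idempotent argument ($f^2=f$ from $\beta(f+I)=\beta(1+I)$, $e=1-f\in I=\ker\pi$, and $xf=0$ for $x\in I$) is a slightly more detailed version of the paper's "straightforward computation" and is complete, given that $Re\subseteq I$ is immediate from $e\in I$.
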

\begin{proof}
Since \eqref{canonicalshortsequence} is split, the sequence 
\begin{equation}\label{tensorshortsequence}
\xymatrix{
0 \ar[r] & R \otimes_S I \ar[r] & R \otimes_S R \ar[r] & R \otimes_S R/I \ar[r] \ar@/^/[l]^{R \otimes \iota} & 0 
}
\end{equation}
is also exact and splits as a sequence of left \(R\)--modules. For each left \(R\)--module \(M\), let \(\mu: R \otimes_S M \rightarrow M\) and \(\alpha: M \rightarrow R \otimes_S M\) be the left \(R\)--module maps defined by \(\mu(a \otimes x) = ax\) and \(\alpha(x) = \sum_i a_i \otimes b_i x\). As shown in the proof of \cite[Proposition 2.6]{Hirata/Sugano:1966}, \(\mu \circ \alpha = id_M \), hence it follows from the commutativity of the diagram
\[
\xymatrix{
0 \ar[r] & I \ar[r] \ar@/^/[d]^{\alpha} & R \ar[r] \ar@/^/[d]^{\alpha} & R/I \ar[r] \ar@/^/[d]^{\alpha} & 0 \\
0 \ar[r] & R \otimes_S I \ar[r] \ar@/^/[u]^{\mu} & R \otimes_S R \ar[r] \ar@/^/[u]^{\mu} & R \otimes_S R/I \ar[r] \ar@/^/[u]^{\mu} \ar@/^/[l]^{R \otimes \iota} & 0 
}
\] 
that \(\beta = \mu \circ (R \otimes \iota) \circ \alpha\) makes \eqref{canonicalshortsequence2} a split exact sequence of left \(R\)--modules, as desired. Finally, a straightforward computation shows that $f = \beta (1 + I)$ is idempotent and that it generates a left ideal $J$ of $R$ such that $R = I \oplus J$. On the other hand, $e = 1-f \in I$, since $f + I = \beta (1+I) + I = 1 +I$. Hence, $I = Re$. 
\end{proof}

\subsection{Separable Ore extensions.}
Our next goal is to extend separability to Ore extensions. Let \(\sigma\) be an endomorphism of a ring \(A\). A (right) \(\sigma\)--derivation is an additive map \(\delta: A \rightarrow A\) such that \(\delta(ab) = \delta(a)\sigma(b) + a \delta(b)\) for all \(a,b\in A\). Given \(\sigma\), the set of all \(\sigma\)--derivations is denoted by \(\Der[]{A}{\sigma}\). Let $B$ be a subring of $A$ such that $\sigma (B) \subseteq B$ and $\delta (B) \subseteq B$ for some $\delta \in \Der[]{A}{\sigma}$. Even though that \(\sigma\) and \(\delta\) need not to be \(B\)--bimodule maps, it is possible to extend them to maps \(\sigma^\otimes, \delta^\otimes : A \otimes_B A \to A \otimes_B A\) as the following lemma shows.

\begin{lemma}\label{sigmadeltatensor}
Let \(B \subseteq A\) be a ring extension, $\sigma$ an endomorphism of $A$, and $\delta \in \Der[]{A}{\sigma}$. If \(\sigma(B) \subseteq B\) and \(\delta(B) \subseteq B\), then the maps
\[
\begin{split}
\sigma^\otimes: A \otimes_B A &\longrightarrow A \otimes_B A \\
a \otimes b &\longmapsto \sigma(a) \otimes \sigma(b) \\
\delta^\otimes: A \otimes_B A &\longrightarrow A \otimes_B A \\
a \otimes b &\longmapsto \delta(a) \otimes \sigma(b) + a \otimes \delta(b)
\end{split}
\]
are well defined.
\end{lemma}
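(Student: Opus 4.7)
The plan is to construct both maps through the universal property of the tensor product $A\otimes_B A$. For each of $\sigma^\otimes$ and $\delta^\otimes$ I would exhibit a biadditive map $A\times A \to A\otimes_B A$ and verify that it is $B$-balanced (middle linear over $B$), which then yields a unique well-defined additive map on $A\otimes_B A$ with the prescribed values on elementary tensors.

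For $\sigma^\otimes$, the biadditivity of $(a,b)\mapsto \sigma(a)\otimes\sigma(b)$ follows directly from the additivity of $\sigma$. The only nontrivial point is $B$-balancedness: for $a,c\in A$ and $b\in B$ one needs $\sigma(ab)\otimes\sigma(c)=\sigma(a)\otimes\sigma(bc)$. Since $\sigma$ is a ring endomorphism this reduces to $\sigma(a)\sigma(b)\otimes\sigma(c)=\sigma(a)\otimes\sigma(b)\sigma(c)$, which holds because the hypothesis $\sigma(B)\subseteq B$ places $\sigma(b)$ inside $B$ and therefore permits it to move across the tensor sign.

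For $\delta^\otimes$, biadditivity of $(a,b)\mapsto \delta(a)\otimes\sigma(b)+a\otimes\delta(b)$ is again immediate from the additivity of $\sigma$ and $\delta$. The balancedness condition, for $b\in B$, becomes
\[
\delta(ab)\otimes\sigma(c)+ab\otimes\delta(c)\;=\;\delta(a)\otimes\sigma(bc)+a\otimes\delta(bc).
\]
Applying the $\sigma$-derivation rule $\delta(xy)=\delta(x)\sigma(y)+x\delta(y)$ to both $\delta(ab)$ and $\delta(bc)$, and expanding, the identity splits into three separate equalities: $\delta(a)\sigma(b)\otimes\sigma(c)=\delta(a)\otimes\sigma(b)\sigma(c)$, $a\delta(b)\otimes\sigma(c)=a\otimes\delta(b)\sigma(c)$, and $ab\otimes\delta(c)=a\otimes b\delta(c)$. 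These are legitimate moves across the tensor exactly because $\sigma(b)\in B$, $\delta(b)\in B$, and $b\in B$, respectively, which is precisely where each of the three hypotheses is used.

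The argument is essentially a bookkeeping exercise; the only real subtlety is recognizing which part of the derivation identity requires which of the two stability hypotheses. I expect this to be the main (still mild) obstacle, since one must be careful to track all three cross terms and not conflate the roles of $\sigma(B)\subseteq B$ and $\delta(B)\subseteq B$.
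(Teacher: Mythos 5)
Your proposal is correct and follows essentially the same route as the paper: both verify well-definedness by checking $B$-balancedness of the defining assignments, reducing to the same three term-by-term identities for $\delta^\otimes$ via the $\sigma$-derivation rule and the stability hypotheses $\sigma(B)\subseteq B$ and $\delta(B)\subseteq B$. The only cosmetic difference is that you invoke the universal property explicitly, whereas the paper checks the balanced relation $\sigma^\otimes(as\otimes b)=\sigma^\otimes(a\otimes sb)$ directly.
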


\begin{proof}
In order to show that \(\sigma^\otimes\) is well defined it is enough to check that \(\sigma^\otimes(as \otimes b) = \sigma^\otimes(a \otimes sb)\) for all \(s \in B\) and all \(a,b \in A\):
\[
\begin{split}
\sigma^\otimes(as \otimes b) &= \sigma(as) \otimes \sigma(b) \\
&= \sigma(a)\sigma(s) \otimes \sigma(b) \\
&= \sigma(a) \otimes \sigma(s) \sigma(b) \\
&= \sigma(a) \otimes \sigma(sb) = \sigma^\otimes(a \otimes sb),
\end{split}
\]
where we have used that \(\sigma\) is a homomorphism of rings and \(\sigma(s) \in B\). Analogously,
\[
\begin{split}
\delta^\otimes(as \otimes b) &= \delta(as) \otimes \sigma(b) + as \otimes \delta(b) \\
&= (\delta(a)\sigma(s) + a \delta(s)) \otimes \sigma(b) + as \otimes \delta(b) \\
&= \delta(a)\sigma(s) \otimes \sigma(b) + a\delta(s) \otimes \sigma(b) + as \otimes \delta(b) \\
&= \delta(a) \otimes \sigma(s)\sigma(b) + a \otimes \delta(s)\sigma(b) + a \otimes s\delta(b) \\
&= \delta(a) \otimes \sigma(sb) + a \otimes (\delta(s)\sigma(b) + s\delta(b)) \\
&= \delta(a) \otimes \sigma(sb) + a \otimes \delta(sb) = \delta^\otimes(a \otimes sb),
\end{split}
\]
where it is used that \(\delta\) is a \(\sigma\)--derivation and \(\delta(B) \subseteq B\). Hence \(\delta^\otimes\) is also well defined.
\end{proof}

\begin{remark}\label{sigmaseparable}
It is not hard to check that if $p \in A \otimes_B A$ is a separability element of $B \subseteq A$, and $\sigma$ is an automorphism of $A$ such that $\sigma (B) \subseteq B$, then $\sigma^\otimes (p)$ is a separability element of $B \subseteq A$. 
\end{remark}

Recall that the Ore extension \(A[z;\sigma,\delta]\) of $A$, where $\sigma$ is a ring endomorphism of $A$ and $\delta \in \Der[]{A}{\sigma}$, is the free right \(A\)--module with basis the powers of \(z\), and multiplication defined by the rule
\[
a z = z \sigma(a) + \delta(a), \text{ for all \(a \in A\)}.
\]
Ore extensions are also known as skew polynomial rings.
With this product, $A[z;\sigma,\delta]$ becomes a (typically non-commutative) ring (see, e.g. \cite[Section 1.2]{McConnell/Robson:2001})
whose elements are polynomials in \(z\) with coefficients on the right, and \(A \subseteq A[z;\sigma,\delta]\) is the subring of polynomials of degree \(0\). In order to prove the main result of this section we need to introduce some notation. Let \(B \subseteq A\) such that \(\sigma(B) \cup \delta(B) \subseteq B\), and let us denote \(R= A[z;\sigma,\delta]\) and \(S = B[z;\sigma_{|B},\delta_{|B}]\). Let \(\varphi\) be the morphism of \(A\)--bimodules defined as the composition of the canonical morphisms
\[
\varphi: A \otimes_B A \to R \otimes_B R \to R \otimes_{S} R.
\]

\begin{theorem}\label{separabletoOre}
Let \(B \subseteq A\) be a separable ring extension with separability element \(p \in A \otimes_{B} A\). Let $\sigma$ be an endomorphism of $A$, and $\delta \in \Der[]{A}{\sigma}$ such that \(\sigma(B) \subseteq B\) and \(\delta(B) \subseteq B\). If \(\sigma^\otimes(p) = p\) and \(\delta^\otimes(p) = 0\), then \(B[z;\sigma_{|B},\delta_{|B}] \subseteq A[z;\sigma,\delta]\) is a separable ring extension with separability element \(\overline{p} = \varphi(p)\).
\end{theorem}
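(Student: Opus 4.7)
The plan is to verify directly that $\overline{p} = \varphi(p)$ satisfies the two defining conditions of a separability element in $R \otimes_{S} R$: the normalization $\mu(\overline{p}) = 1$ and the centrality identity $r\overline{p} = \overline{p}r$ for every $r \in R$.

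The normalization is immediate. Writing $p = \sum_i a_i \otimes b_i$, the multiplication map on $R \otimes_S R$ sends $\overline{p}$ to $\sum_i a_i b_i$, which equals $1$ because $p$ is a separability element of $B \subseteq A$. For centrality, since $R$ is generated as a ring by $A$ together with the element $z$, it suffices to check that $\overline{p}$ commutes with every $a \in A$ and with $z$. Commutation with $a \in A$ is essentially free: $\varphi$ is a homomorphism of $A$-bimodules, so it carries the identities $ap = pa$ (valid in $A \otimes_B A$ by hypothesis on $p$) to $a\overline{p} = \overline{p}a$ in $R \otimes_S R$.

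The substantive step is $z\overline{p} = \overline{p}z$. Here I would start from $\overline{p}z = \sum_i a_i \otimes b_i z$, apply the Ore relation $b_i z = z\sigma(b_i) + \delta(b_i)$, and obtain $\sum_i a_i \otimes z\sigma(b_i) + \sum_i a_i \otimes \delta(b_i)$. Because $z$ lies in $S$, it may be pulled across the tensor, giving $\sum_i a_i z \otimes \sigma(b_i) + \sum_i a_i \otimes \delta(b_i)$. A second use of the Ore rule expands $a_i z = z\sigma(a_i) + \delta(a_i)$, and collecting terms yields
\[
\overline{p}z = z\sum_i \sigma(a_i) \otimes \sigma(b_i) + \sum_i \bigl(\delta(a_i) \otimes \sigma(b_i) + a_i \otimes \delta(b_i)\bigr) = z\,\varphi(\sigma^\otimes(p)) + \varphi(\delta^\otimes(p)).
\]
The hypotheses $\sigma^\otimes(p) = p$ and $\delta^\otimes(p) = 0$ then finish the argument: the first gives $z\,\varphi(\sigma^\otimes(p)) = z\overline{p}$ and the second kills the remainder, so $\overline{p}z = z\overline{p}$.

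The main obstacle is purely bookkeeping: one must keep straight which factors are being acted on by the Ore relations, remember that the tensor product is over $S$ so that $z$ slides freely across it, and recognize the final linear combination as the image under $\varphi$ of the defining expressions of $\sigma^\otimes$ and $\delta^\otimes$ applied to $p$. Once that identification is made, the two invariance hypotheses on $p$ match the calculation perfectly, and nothing further beyond Lemma \ref{sigmadeltatensor} is needed.
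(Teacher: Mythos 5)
Your proposal is correct and follows essentially the same route as the paper's proof: verify $\mu(\overline{p})=1$ directly, get commutation with $A$ from the $A$-bimodule property of $\varphi$, and establish $\overline{p}z=z\overline{p}$ by the same two applications of the Ore relation combined with sliding $z\in S$ across the tensor, after which the hypotheses $\sigma^\otimes(p)=p$ and $\delta^\otimes(p)=0$ close the computation. No gaps.
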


\begin{proof}
Let $S = B[z;\sigma_{|B},\delta_{|B}]$, $R = A[z;\sigma,\delta]$. 
If \(p = \sum_{i} a_i \otimes_B b_i\), then \(\overline{p} = \varphi(p) = \sum_i a_i \otimes_{S} b_i\). For all \(a \in A\),
\[
a \overline{p} = a \varphi(p) = \varphi(ap) = \varphi(pa) = \varphi(p)a = \overline{p} a,
\]
so it remains to prove \eqref{separability1} for \(z\): 
\begin{displaymath}
\begin{split}
\overline{p} z &= \sum_{i} a_i \otimes_{S} b_i z \\
&= \sum_{i} a_i \otimes_{S} (z \sigma(b_i) + \delta(b_i)) \\
&= \sum_{i} a_i z \otimes_{S} \sigma(b_i) + \sum_i a_i \otimes_{S} \delta(b_i) \\
&= \sum_{i} (z \sigma(a_i) + \delta(a_i)) \otimes_{S} \sigma(b_i) + \sum_i a_i \otimes_{S}\delta(b_i) \\
&= z \sum_i \sigma(a_i) \otimes_{S} \sigma(b_i) + \sum_i \left( \delta(a_i) \otimes_{S} \sigma(b_i) + a_i \otimes_{S} \delta(b_i)\right) \\
&= z \varphi\left(\sum_i \sigma(a_i) \otimes_B \sigma(b_i)\right) + \varphi\left(\sum_i \left( \delta(a_i) \otimes_{B} \sigma(b_i) + a_i \otimes_B \delta(b_i)\right) \right) \\
&= z \varphi(\sigma^\otimes(p)) + \varphi(\delta^\otimes(p)) \\
&= z \varphi(p) = z \overline{p}
\end{split}
\end{displaymath} 
as desired. On the other hand,
\[
\mu (\overline{p}) = \sum_i a_i b_i = 1
\]
and the proof is completed.
\end{proof}

We close this section with some fundamental examples of separable Ore extensions of interest in the rest of the paper. 

\begin{example}\label{separablefinitefield}
Let \(\bfield = \field[q] \subseteq \field[q^t]\) be a finite field extension. Let $\sigma = \tau^h$ be an $\bfield$--automorphism, where $\tau$ denotes the Frobenius automorphism of the extension. As a consequence of Theorem \ref{separabletoOre}, the ring extension $\bfield [z] \subseteq \field[q^t][z;\sigma]$ is separable. To see this, we will exhibit a separability element $p \in \field[q^t] \otimes_{\bfield} \field[q^t]$ of the extension $\bfield \subseteq \field[q^t]$ such that $\sigma^\otimes(p) = p$. We follow \cite{Lidl/Niederreiter:1997} for basic facts concerning finite fields, in particular we follow the notation and properties about the trace function.
It is well known that a separability element can be obtained from dual bases. The dual basis of a normal basis is also normal, hence let \(\{a, a^q \dots, a^{q^{t-1}}\},\{b, b^q \dots, b^{q^{t-1}}\}\) be normal dual bases. We are going to check, for convenience of the reader, that \(p = \sum_i a^{q^i} \otimes b^{q^i} \in \field[q^t] \otimes_{\bfield} \field[q^t]\) is a separability element. Dual bases are characterized by the equalities \(\alpha = \sum_i \trace_{\field[q^t]/\bfield}(b^{q^i} \alpha) a^{q^i} = \sum_i \trace_{\field[q^t]/\bfield}(a^{q^i} \alpha) b^{q^i}\) for all $\alpha \in \field[q^t]$. Hence,
\begin{displaymath}
\begin{split}
\alpha p &= \sum_i \alpha a^{q^i} \otimes b^{q^i} \\
&= \sum_i \sum_j \trace_{\field[q^t]/\bfield}(b^{q^j} \alpha a^{q^i}) a^{q^j} \otimes b^{q^i} \\
&= \sum_j a^{q^j} \otimes \sum_i \trace_{\field[q^t]/\bfield}(a^{q^i} b^{q^j} \alpha) b^{q^i} \\
&= \sum_j a^{q^j} \otimes b^{q^j} \alpha \\
&= p \alpha,
\end{split}
\end{displaymath}
and \eqref{separability1} is satisfied. Moreover \(\sum_i a^{q^i} b^{q^i} = \sum_i (ab)^{q^i} = \trace_{\field[q^t]/\bfield}(ab) = 1\) by duality, and thus \eqref{separability2} also holds. Since \(\sigma(x) = x^{q^h}\) for all \(x \in \field[q^t]\), we get that \(\sigma(a^{q^i}) = a^{q^{i+h \pmod{t}}}\), and similarly for $b^{q^i}$. Therefore, \(\sigma\) induces the same permutation on \(\{a, a^q \dots, a^{q^{t-1}}\}\) and \(\{b, b^q \dots, b^{q^{t-1}}\}\), which clearly implies that $\sigma^{\otimes}(p) = p$. Therefore, a separability element for $\bfield[z] \subseteq \field[q^t][z;\sigma]$ is
\[
\overline{p} = \sum_i a^{q^i} \otimes_{\bfield[z]} b^{q^i}\]
\end{example}

\begin{example}\label{explmat}
Matrix rings give well known examples of separable ring extensions. Let \(A = \mathcal{M}_n(B)\) be the $n \times n$ matrix ring with entries in a given ring $B$, and $\sigma : A \to A$ be an automorphism such that $\sigma (B) \subseteq B$. Consider \(\{E_{ij} ~|~ 1 \leq i,j \leq n \} \) the matrix units, i.e. \(E_{ij}\) is the matrix with \(1\) in row \(i\) column \(j\) and \(0\) elsewhere. From the relations of the products of matrix units, it follows easily that, for all \(1 \leq j \leq n\), \(\sum_{i=1}^{n} E_{ij} \otimes E_{ji}\) are separability elements of the extension $B \subseteq A$. Although they do not need to be invariant under \(\sigma^\otimes\), it is possible to cover many cases using them, as the following example shows (see also Theorem \ref{simplest}). 
\end{example}

\begin{example}\label{2x2mat}
Let $A=\mathcal{M}_2(\field[8])$ be the ring of $2\times2$ matrices over the field $\field[8]=\field[2][\alpha]/(\alpha^3+\alpha+1)$. In what follows, we write the elements of $\field[8] \backslash \{0,1\}$ as powers of the primitive element $\alpha$, and not as polynomials. This convention is used in all applicable examples. Let $\sigma:A\to A$ be the automorphism given by $\sigma(X)=UXU^{-1}$, where
\[
U=\begin{pmatrix}
\alpha^{4} & 1 \\
1 & \alpha
\end{pmatrix}.
\]
The reader may check that the order of $\sigma$ is \(3\). Let 
\[
 p =\begin{pmatrix}
1 & 0 \\
0 & 0
\end{pmatrix} \otimes \begin{pmatrix}
1 & 0 \\
0 & 0
\end{pmatrix}+
\begin{pmatrix}
0 & 0 \\
1 & 0
\end{pmatrix} \otimes
\begin{pmatrix}
0 & 1 \\
0 & 0
\end{pmatrix}
\]
be a separability element of the extension $\field[8]\subseteq \mathcal{M}_2(\field[8])$ as explained in Example \ref{explmat}. Hence, since $(|\sigma|,\mathrm{char}(\field[8]))=1$, $ \overline{p} = |\sigma|^{-1} ( p +(\sigma\otimes\sigma)( p )+(\sigma^2\otimes\sigma^2)( p ))$ is a separability element of the extension $\field[8][z]\subseteq \mathcal{M}_2(\field[8])[z;\sigma]$. Explicitly,
\[
\begin{split}
\overline{p} &= \begin{pmatrix}
1 & 0 \\
0 & 0
\end{pmatrix} \otimes 
\begin{pmatrix}
1 & 0 \\
0 & 0
\end{pmatrix}+
\begin{pmatrix}
0 & 0 \\
1 & 0
\end{pmatrix} \otimes
\begin{pmatrix}
0 & 1 \\
0 & 0
\end{pmatrix}
+
\begin{pmatrix}
\alpha & 1 \\
\alpha^4 & \alpha^3
\end{pmatrix} \otimes 
\begin{pmatrix}
\alpha & 1 \\
\alpha^4 & \alpha^3
\end{pmatrix} \\ 
&\quad +
\begin{pmatrix}
\alpha^4 & \alpha^3 \\
\alpha^5 & \alpha^4
\end{pmatrix}
 \otimes
\begin{pmatrix}
1 & \alpha^4 \\
\alpha^3 & 1
\end{pmatrix}
+
\begin{pmatrix}
\alpha & \alpha^4 \\
1 & \alpha^3
\end{pmatrix} \otimes \begin{pmatrix}
\alpha & \alpha^4 \\
1 & \alpha^3
\end{pmatrix} \\ 
&\quad + 
\begin{pmatrix}
1 & \alpha^3 \\
\alpha^4 & 1
\end{pmatrix} \otimes
\begin{pmatrix}
\alpha^4 & \alpha^5 \\
\alpha^3 & \alpha^4
\end{pmatrix}.
\end{split}
\]
\end{example}

\subsection{Separable automorphisms.}\label{sec:separableautos}

Let $A$ be a separable algebra over a field $\Kfield$. Then $A$ is a finite dimensional semisimple $\Kfield$--algebra \cite[Example I, page 40]{DeMeyer/Ingraham:1971}. Consider the decomposition $1 = e_1 + \cdots + e_{n}$, where $e_1, \dots, e_{n}$ are (different) central idempotents of $A$ such that $Ae_i$ is a simple algebra for all $i= 1, \dots, n$. We call $\{ e_1, \dots, e_n \}$ a complete set of central idempotents for $A$. We have a block decomposition $A = Ae_1\oplus \cdots \oplus A e_{n}$ of $A$ into simple algebras which, in fact, are separable (as they are factor algebras of $A$, see \cite[Proposition 1.1]{DeMeyer/Ingraham:1971}). Actually, any set of separability elements of the algebras $Ae_i$ may be lifted to a separability element of $A$. This is a consequence of a more general result, given in Lemma \ref{sepidempt}. Recall that if $e$ is a central idempotent of a ring $A$, then the projection $A \to Ae$ that maps $a \in A$ onto $ae$ is a homomorphism of rings. 

\begin{lemma}\label{sepidempt}
Let $\rho : B \to A$ be a ring homomorphism, and assume that $1 = e_1 + e_2$, where $e_1, e_2$ are nontrivial central idempotents of $A$. If $\xymatrix{B \ar^{\rho}[r] & A \ar[r] & Ae_i}$ is a separable ring extension with separability element $p_i$ for $i = 1, 2$, then $\rho : B \to A$ is a separable ring extension with separability element $p = p_1 + p_2$. 
\end{lemma}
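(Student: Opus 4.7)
The plan is to view each $p_i \in Ae_i \otimes_B Ae_i$ as an element of $A \otimes_B A$ through the canonical map induced by the inclusion $Ae_i \hookrightarrow A$ of $B$--bimodules, so that $p = p_1 + p_2$ makes sense as an element of $A \otimes_B A$. Then one needs to verify the two defining properties of a separability element, namely $\mu(p) = 1$ and $a p = p a$ for all $a \in A$. A preliminary observation is that since $1 = e_1 + e_2$ with $e_1, e_2$ idempotent and central, they are orthogonal: multiplying $e_1 + e_2 = 1$ by $e_1$ yields $e_1 e_2 = 0$, and likewise $e_2 e_1 = 0$.

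For the multiplication axiom, I would observe that the multiplication $\mu : A \otimes_B A \to A$ restricts, on the image of $Ae_i \otimes_B Ae_i$, to the composition of the multiplication $\mu_i : Ae_i \otimes_B Ae_i \to Ae_i$ with the inclusion $Ae_i \hookrightarrow A$. Since $p_i$ is a separability element for $B \to Ae_i$, and the unit of $Ae_i$ is $e_i$, one has $\mu_i(p_i) = e_i$. Therefore $\mu(p) = \mu(p_1) + \mu(p_2) = e_1 + e_2 = 1$.

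For the commutation axiom, the key computation is that for any $a \in A$ and any $i$, the elements $a p_i$ and $p_i a$ both reduce to the action of $ae_i$. Writing $p_i = \sum_j x_j \otimes y_j$ with $x_j, y_j \in Ae_i$, so that $x_j = e_i x_j$ and $y_j = y_j e_i$, centrality of $e_i$ gives $a x_j = (ae_i) x_j$ and $y_j a = y_j (ae_i)$ (using $e_i a = a e_i$ and $e_i^2 = e_i$). Thus $a p_i = (ae_i) p_i$ and $p_i a = p_i (ae_i)$ in $A \otimes_B A$. Since $ae_i \in Ae_i$, the separability identity for $p_i$ inside $Ae_i \otimes_B Ae_i$ yields $(ae_i) p_i = p_i (ae_i)$ in $Ae_i \otimes_B Ae_i$, and this equality is preserved under the canonical map to $A \otimes_B A$. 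Combining for $i=1,2$ gives $a(p_1+p_2) = (p_1+p_2)a$.

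The result then follows from the two verifications. The only mildly delicate point, and the one I would take care to state clearly, is that the canonical map $Ae_i \otimes_B Ae_i \to A \otimes_B A$ need not be injective in general, but injectivity is not required: what matters is that an identity holding in $Ae_i \otimes_B Ae_i$ maps to the same formal identity in $A \otimes_B A$. Everything else is a routine use of the orthogonality and centrality of $e_1$ and $e_2$.
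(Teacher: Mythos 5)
Your proof is correct and follows essentially the same route as the paper's: both regard $Ae_i \otimes_B Ae_i$ as mapping canonically into $A \otimes_B A$ (the paper via the direct-sum decomposition $A = Ae_1 \oplus Ae_2$ of $B$--bimodules), check $\mu(p)=e_1+e_2=1$, and reduce the commutation axiom to $ap_i=(ae_i)p_i=p_i(ae_i)=p_ia$ using centrality of the $e_i$. Your remark that injectivity of the canonical map is not needed is a reasonable observation, though in fact the map is a split monomorphism here precisely because of the direct-sum decomposition the paper invokes.
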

\begin{proof}
We have an $A$--bimodule decomposition $A = Ae_1 \oplus Ae_2$ which, obviously, is also a direct sum of $B$--bimodules. Therefore, since the tensor product $\otimes_B$ preserves direct sums, we may consider $Ae_i \otimes_B Ae_i$ as an $A$--subbimodule of $A \otimes_B A$ for $i=1,2$. Thus, the sum $p = p_1 + p_2$ makes sense in $A \otimes_B A$. 
Observe that, for $i = 1, 2$, the $B$--bimodule structure induced on $Ae_i$ by the ring homomorphism $\xymatrix{B \ar^{\rho}[r] & A \ar[r] & Ae_i}$ coincides with the given as an $B$--subbimodule of $A$. Now, for $r \in A$, 
\[
rp = rp_1 + rp_2 = re_1p_1 + re_2p_2 = p_1re_1+ p_2re_2 = p_1r + p_2 r = pr,
\]
and
\[
\mu (p) = \mu (p_1) + \mu (p_2) = e_1 + e_2 = 1,
\]
as desired.
\end{proof}

\begin{remark}\label{sepofproducts}
It follows from Lemma \ref{sepidempt} that if \(B \subseteq A_1\) and \(B \subseteq A_2\) are separable ring extensions with separability elements \(p_1 = \sum_i f_i \otimes g_i\) and \(p_2 = \sum_j k_j \otimes l_j\), respectively, then \(B \subseteq A_1 \times A_2\) is a separable extension with separability element 
\[
p = \sum_{i} (f_i,0) \otimes (g_i,0) + \sum_j (0,k_j) \otimes (0,l_j),
\]
where \(B\) is identified with its image in \(A_1 \times A_2\) via diagonal inclusion.
\end{remark}

Let $\sigma$ be a $\Kfield$--automorphism of our separable algebra $A$. It is easily checked that $\{ \sigma(e_1), \dots, \sigma(e_{n}) \}$ is a set of central idempotents of $A$ such that $1 = \sum_{i=1}^{n} \sigma(e_i)$, and that the restriction of $\sigma$ to each $Ae_i$ gives an algebra isomorphism $\sigma_i : Ae_i \to A\sigma(e_i)$. Therefore, the set $\{\sigma(e_1), \dots, \sigma(e_{n}) \}$ must be equal to $\{ e_1, \dots, e_{n} \}$ and $\sigma$ induces a permutation $\overline{\sigma}$ on $\{1, \dots, n \}$ such that $\sigma(e_i) = e_{\overline{\sigma}(i)}$, for all $i = 1, \dots, n$. 

Let $\{ 1, \dots, n \} = \bigcup_{j=1}^t Z_j$ be the partition of $\{1, \dots, n \}$ into orbits under the action of $\overline{\sigma}$. Then $A = \oplus_{j = 1}^t A^{(Z_j)}$, where $A^{(Z_j)} = \oplus_{i \in Z_j} Ae_i$. Moreover, $\sigma (A^{(Z_j)}) \subseteq A^{(Z_j)}$ for all $j = 1, \dots, t$, therefore inducing by restriction an automorphism $\sigma^{(Z_j)} $ of $A^{(Z_j)}$. 
The following lemma is a direct consequence of the previous discussion and Lemma \ref{sepidempt}.

\begin{lemma}\label{partition}
 If $p_j \in A^{(Z_j)} \otimes A^{(Z_j)}$ is a separability element such that ${\sigma^{(Z_j)}}^\otimes (p_j) = p_j$ for all $j = 1, \dots, t$, then $p = \sum_{j=1}^t p_j$ is a separability element of $A$ such that $\sigma^\otimes (p) = p$. 
\end{lemma}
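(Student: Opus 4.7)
The plan is to derive the lemma directly from Lemma \ref{sepidempt}, extended by induction from two blocks to $t$ blocks, together with the observation that $\sigma^{\otimes}$ respects the orbit decomposition. First I would introduce the orthogonal central idempotents $f_j = \sum_{i \in Z_j} e_i$ for $j = 1, \dots, t$; they sum to $1$ and satisfy $A^{(Z_j)} = A f_j$. Since $\overline{\sigma}$ permutes each $Z_j$ internally, $\sigma(f_j) = f_j$, and the restriction of $\sigma$ to $A^{(Z_j)}$ is exactly $\sigma^{(Z_j)}$; consequently $\sigma^\otimes$ sends $A^{(Z_j)} \otimes_\Kfield A^{(Z_j)}$ into itself and restricts there to $(\sigma^{(Z_j)})^\otimes$.

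For the separability part, I would apply Lemma \ref{sepidempt} by induction on $t$: the case $t = 1$ is vacuous and $t = 2$ is the lemma itself; the inductive step groups $f_{t-1} + f_t$ as a single central idempotent, uses the lemma once to obtain $p_{t-1} + p_t$ as a separability element of $\Kfield \to A(f_{t-1} + f_t)$, and then applies the hypothesis to the partition $\{f_1, \dots, f_{t-2}, f_{t-1} + f_t\}$. This mirrors exactly the observation already recorded in Remark \ref{sepofproducts}. Alternatively, the two defining identities can be verified directly: since $p_j \in A^{(Z_j)} \otimes_\Kfield A^{(Z_j)}$ satisfies $f_j p_j = p_j = p_j f_j$ and is centralised by $A^{(Z_j)}$, for every $r \in A$ one has
\[
r p = \sum_{j=1}^{t} r f_j p_j = \sum_{j=1}^{t} p_j r f_j = p r,
\]
while $\mu(p) = \sum_{j=1}^{t} \mu(p_j) = \sum_{j=1}^{t} f_j = 1$.

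The $\sigma^\otimes$-invariance then follows immediately from the first paragraph:
\[
\sigma^\otimes(p) = \sum_{j=1}^t \sigma^\otimes(p_j) = \sum_{j=1}^t (\sigma^{(Z_j)})^\otimes(p_j) = \sum_{j=1}^t p_j = p.
\]
There is no genuine obstacle here; the whole argument is bookkeeping on the block decomposition. The only points requiring care are the natural embedding of $A^{(Z_j)} \otimes_\Kfield A^{(Z_j)}$ as a direct summand of $A \otimes_\Kfield A$ (which holds because $A = \bigoplus_j A^{(Z_j)}$ as $\Kfield$-bimodules and the tensor product over the field $\Kfield$ distributes over these sums) and the verification that Lemma \ref{sepidempt} iterates cleanly, both of which are routine.
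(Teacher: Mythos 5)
Your argument is correct and follows exactly the route the paper intends: the paper states Lemma \ref{partition} as an immediate consequence of Lemma \ref{sepidempt} and the preceding discussion, and your proof simply supplies the implicit details (the iteration of Lemma \ref{sepidempt} over the orthogonal central idempotents $f_j=\sum_{i\in Z_j}e_i$, and the observation that $\sigma^\otimes$ restricts to $(\sigma^{(Z_j)})^\otimes$ on each block). Both the inductive application of Lemma \ref{sepidempt} and your alternative direct verification are sound.
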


Observe that, in Lemma \ref{partition}, each block $A^{(Z_j)}$ is a direct sum of finitely many isomorphic simple separable algebras, and that the corresponding permutation $\overline{\sigma^{(Z_j)}}$ on $Z_j$ is cyclic. We thus study this case separately. 

\begin{proposition}\label{separablecyclic}
Let $\sigma$ be an automorphism of a separable algebra $B$ over a field $\Kfield$, with block decomposition into simple algebras $B = \bigoplus_{i = 1}^m B_i$. Assume that $\overline{\sigma}$ permutes cyclically $\{1, \dots, m \}$ in the natural order. Let $\sigma_i : B_i \to B_{i+1}$ be the isomorphism induced by restriction of $\sigma$ to $B_i$ for every $i = 1, \dots, m$, assuming that $B_{m+1} = B_1$. If $p_1 \in B_1 \otimes B_1$ is a separability element such that $(\sigma_m \circ \cdots \circ \sigma_2 \circ \sigma_1)^{\otimes} (p_1) = p_1$, then 
\begin{equation}
p = p_1 + \sum_{i =1}^{m-1} \sigma_i^{\otimes} \circ \cdots \circ \sigma_1^{\otimes} (p_1) 
\end{equation} 
is a separability element of $B$ such that $\sigma^{\otimes}(p) = p$.
\end{proposition}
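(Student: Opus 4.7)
My plan is to construct the element $p$ explicitly as a sum $p = \sum_{i=1}^m p_i$ with $p_i \in B_i \otimes_{\Kfield} B_i$, defined by $p_1$ (the given one) and $p_{i+1} = \sigma_i^{\otimes}(p_i) = (\sigma_i \circ \cdots \circ \sigma_1)^{\otimes}(p_1)$ for $1 \le i \le m-1$. So the proof splits into three tasks: (i) verify that each $p_i$ is a separability element of $\Kfield \subseteq B_i$, (ii) assemble them into a separability element for $\Kfield \subseteq B$, and (iii) check $\sigma^{\otimes}(p) = p$.

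For (i), I would observe that each $\sigma_i$ is a $\Kfield$-algebra isomorphism $B_i \to B_{i+1}$ (it is the restriction of the $\Kfield$-automorphism $\sigma$ of $B$ to a block). A direct verification, essentially identical to the argument recorded in Remark \ref{sigmaseparable} but applied to an isomorphism between two algebras, shows that if $q$ is a separability element of a $\Kfield$-algebra $C$ and $\phi : C \to C'$ is a $\Kfield$-algebra isomorphism, then $\phi^{\otimes}(q)$ is a separability element of $C'$: the two defining relations \eqref{separability1} and \eqref{separability2} are preserved because $\phi$ is multiplicative and bijective. Iterating this from $p_1$ gives $p_{i+1} \in B_{i+1} \otimes_{\Kfield} B_{i+1}$ is a separability element of $\Kfield \subseteq B_{i+1}$. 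For (ii), the block decomposition $B = \bigoplus_{i=1}^m B_i$ comes from the orthogonal central idempotents $1 = 1_{B_1} + \cdots + 1_{B_m}$, and because $\otimes_{\Kfield}$ respects this direct sum, each $B_i \otimes_{\Kfield} B_i$ embeds as a $B$-subbimodule of $B \otimes_{\Kfield} B$, so $p = \sum p_i$ makes sense there. Iterating Lemma \ref{sepidempt} (equivalently, the obvious $m$-fold generalization of Remark \ref{sepofproducts}) yields that $p$ is a separability element of $\Kfield \subseteq B$.

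For (iii), the only point to check is that under the canonical identification, $\sigma^{\otimes}$ restricted to $B_i \otimes_{\Kfield} B_i$ equals $\sigma_i^{\otimes}$ viewed as a map to $B_{i+1} \otimes_{\Kfield} B_{i+1}$; this is immediate from the definition of $\sigma^{\otimes}$ and the fact that $\sigma$ restricts to $\sigma_i$ on $B_i$. Granting this, for $1 \le i \le m-1$ we have $\sigma^{\otimes}(p_i) = \sigma_i^{\otimes}(p_i) = p_{i+1}$ by construction, and for $i = m$ we have
\[
\sigma^{\otimes}(p_m) = \sigma_m^{\otimes}(p_m) = (\sigma_m \circ \sigma_{m-1} \circ \cdots \circ \sigma_1)^{\otimes}(p_1) = p_1
\]
by the hypothesis on $p_1$. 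Summing over $i$ yields $\sigma^{\otimes}(p) = p_2 + p_3 + \cdots + p_m + p_1 = p$, as required.

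The main obstacle, if any, is the bookkeeping of tensor products: making sure the identifications $B_i \otimes_{\Kfield} B_i \hookrightarrow B \otimes_{\Kfield} B$ are compatible both with Lemma \ref{sepidempt} (so that $p$ really is a separability element of the big algebra, not merely of each block) and with the action of $\sigma^{\otimes}$ (so that it genuinely cyclically permutes the summands $p_i$). Once these identifications are made precise, the proof is a short composition of the facts already established in the paper.
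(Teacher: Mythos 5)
Your proposal is correct and follows essentially the same route as the paper: both decompose $p$ into summands living in the blocks $B_i \otimes_{\Kfield} B_i$, invoke Lemma \ref{sepidempt} to assemble a separability element for $B$ from separability elements of the blocks, and verify $\sigma^{\otimes}(p)=p$ by observing that $\sigma^{\otimes}$ cyclically permutes the summands with the hypothesis $(\sigma_m\circ\cdots\circ\sigma_1)^{\otimes}(p_1)=p_1$ closing the cycle. Your extra care with the identifications $B_i\otimes_{\Kfield}B_i\hookrightarrow B\otimes_{\Kfield}B$ and with the fact that isomorphisms transport separability elements is a welcome tightening of details the paper treats more briskly, but it is not a different argument.
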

\begin{proof}
First, a word on notation: on the analogy of Lemma \ref{sigmadeltatensor}, $\sigma_i^\otimes$ denotes $\sigma_i \otimes \sigma_i$, for all $i = 1, \dots, n$. Observe that $\sigma_i^{\otimes} \circ \cdots \circ \sigma_1^{\otimes} =(\sigma_i \circ \cdots \circ \sigma_1)^{\otimes}$. Now, 
\[
\begin{split}
\sigma^{\otimes}(p) &= \sigma^{\otimes}(p_1) + \sum_{i=1}^{m-1}\sigma^{\otimes} \circ \sigma_i^{\otimes} \circ \cdots \circ \sigma_1^{\otimes}(p_1) \\ 
&= \sigma_1^{\otimes}(p_1) + \sum_{i=1}^{m-1}\sigma_{i+1}^{\otimes} \circ \sigma_i^{\otimes} \circ \cdots \circ \sigma_1^{\otimes}(p_1) \\
&= \sigma_1^{\otimes}(p_1) + \sum_{j=2}^{m-1}\sigma_{j}^{\otimes} \circ \sigma_{j-1}^{\otimes} \circ \cdots \circ \sigma_1^{\otimes}(p_1) + \sigma_m^\otimes \circ \dots \circ \sigma_1^\otimes (p_1) \\
&= \sum_{i =1}^{m-1} \sigma_i^{\otimes} \circ \cdots \circ \sigma_1^{\otimes} (p_1) + p_1 \\
&= p,
\end{split}
\]
since $\sigma_m^{\otimes} \circ \cdots \circ \sigma_1^{\otimes} (p) = (\sigma_m \circ \cdots \circ \sigma_1)^{\otimes}(p)$. On the other hand, $\sigma_i^{\otimes} \circ \cdots \circ \sigma_1^{\otimes} (p)$ is a separability element of the algebra $B_{i+1}$ for every $i = 1, \dots, m-1$. By Lemma \ref{sepidempt}, $p$ becomes a separability element for $B$, which completes the proof.
\end{proof}

We thus deduce Theorem \ref{separableautos} below, as a consequence of Lemma \ref{partition} and Proposition \ref{separablecyclic}. 

\begin{theorem}\label{separableautos}
Let $\sigma$ be an automorphism of a separable $\Kfield$--algebra $A$, and
write each orbit $Z_j = \{j_1, \dots, j_{m_j} \}$, for some $m_1, \dots, m_t > 0$, in such a way that $\overline{\sigma}$ acts as the cyclic permutation $(j_1, \dots, j_{m_j})$ on $Z_j$ for $j=1, \dots, t$. Assume that, for every $j =1, \dots, t$, there exists a separability element $p_j \in Ae_{j_1} \otimes Ae_{j_1}$ such that $(\sigma_{j_{m_j}} \circ \cdots \circ \sigma_{j_2} \circ \sigma_{j_1})^{\otimes} (p_j) = p_j$. Then
\[
p = \sum_{j=1}^t p_j + \sum_{j=1}^t \sum_{i=1}^{m_{j}-1} \sigma_{j_i}^{\otimes} \circ \cdots \circ \sigma_{j_1}^{\otimes} (p_j)
\]
is a separability element of $A$ such that $\sigma^{\otimes}(p) = p$. 
\end{theorem}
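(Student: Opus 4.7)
The plan is to apply Proposition \ref{separablecyclic} to each orbit separately and then glue the results together via Lemma \ref{partition}. The text preceding the statement already spells out that $\sigma$ permutes the complete set of central idempotents $\{e_1,\dots,e_n\}$, inducing the permutation $\overline{\sigma}$, and yields the $\sigma$--invariant block decomposition $A = \bigoplus_{j=1}^t A^{(Z_j)}$, where $A^{(Z_j)} = \bigoplus_{i \in Z_j} Ae_i$. I would take this as the starting point of the argument.

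First I would fix $j \in \{1,\dots,t\}$ and analyze the restriction $\sigma^{(Z_j)}$ of $\sigma$ to $A^{(Z_j)}$. Because the orbit is written $Z_j = \{j_1,\dots,j_{m_j}\}$ so that $\overline{\sigma}$ acts as the cycle $(j_1,\dots,j_{m_j})$, the restriction $\sigma_{j_i} : Ae_{j_i} \to Ae_{j_{i+1}}$ (indices modulo $m_j$) is a $\Kfield$--algebra isomorphism, and $\sigma^{(Z_j)}$ cyclically permutes the simple blocks of $A^{(Z_j)}$ in the natural order. This places us exactly in the setting of Proposition \ref{separablecyclic}, with $B = A^{(Z_j)}$, $B_i = Ae_{j_i}$, and the role of its $\sigma_i$ played by $\sigma_{j_i}$. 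The hypothesis of our theorem guarantees precisely the required condition $(\sigma_{j_{m_j}} \circ \cdots \circ \sigma_{j_1})^\otimes(p_j) = p_j$. Proposition \ref{separablecyclic} then delivers
\[
P_j \;=\; p_j + \sum_{i=1}^{m_j - 1} \sigma_{j_i}^{\otimes} \circ \cdots \circ \sigma_{j_1}^{\otimes}(p_j) \;\in\; A^{(Z_j)} \otimes A^{(Z_j)},
\]
a separability element of $A^{(Z_j)}$ satisfying $(\sigma^{(Z_j)})^\otimes(P_j) = P_j$.

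Next I would collect these pieces using Lemma \ref{partition}: the assumptions of that lemma are met by the family $\{P_j\}_{j=1}^t$ just constructed, so $p = \sum_{j=1}^t P_j$ is a separability element of $A$ with $\sigma^\otimes(p) = p$. Substituting the explicit formula for each $P_j$ and regrouping the two types of summands yields the displayed expression in the statement of the theorem, concluding the proof.

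The main obstacle is essentially bookkeeping rather than mathematical content, since Proposition \ref{separablecyclic} and Lemma \ref{partition} do the actual work. The one thing to be careful about is to verify that the identification between the abstract cyclic setup of Proposition \ref{separablecyclic} and the concrete orbit data $(Z_j, \sigma_{j_1}, \dots, \sigma_{j_{m_j}})$ respects the indexing conventions, so that the $\sigma^\otimes$--invariance of each $P_j$ inside $A^{(Z_j)}\otimes A^{(Z_j)}$ lifts correctly to $\sigma^\otimes$--invariance of the sum inside $A \otimes A$ (which is guaranteed by the $\sigma$--stability of the decomposition $A = \bigoplus_j A^{(Z_j)}$).
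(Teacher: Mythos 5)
Your proposal is correct and follows exactly the route the paper takes: the paper states Theorem \ref{separableautos} as a direct consequence of Proposition \ref{separablecyclic} (applied to each orbit block $A^{(Z_j)}$) followed by Lemma \ref{partition} (to sum the resulting invariant separability elements), which is precisely the two-step argument you spell out. Your care in distinguishing the block-level elements $P_j$ from the single-component elements $p_j$ makes the bookkeeping cleaner than the paper's terse statement.
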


\section{Ideal codes generated by idempotents.}\label{sec:idealcode}

Let $\bfield = \field[q]$ be the finite field with $q$ elements, and consider a finite semisimple $\bfield$--algebra $A$. Let $\sigma$ be an $\bfield$--automorphism of $A$, and $\delta$ an $\bfield$--linear $\sigma$--derivation. Then the commutative polynomial ring $\bfield [z]$ is a subring of the Ore extension $R=A[z;\sigma,\delta]$ in the obvious way (that is, $\bfield [z]$ is the $\bfield$--subalgebra of $R$ generated by $z$). Thus, every left $R$--module becomes an $\bfield [z]$--module by restriction of scalars and, in particular, $R$ may be considered as an $\bfield [z]$-- module in this way. Every basis $B = \{v_0, \dots, v_{n-1}\}$ of $A$ as a vector space over $\bfield$ becomes also a basis of $R$ as an $\bfield [z]$--module. In fact, the map 
\[
\xymatrix@R=1pt{\mathfrak{p} : \bfield [z]^n \ar[r] & R \\
 (f_j(z))_{j=0}^{n-1} \ar@{|->}[r] & \sum_{j=0}^{n-1} f_j(z)v_j}
\]
is an isomorphism of left $\bfield [z]$--modules. Observe that if $f_j(z) = \sum_{i=0}^m z^ia_{ij}$, for some $a_{ij} \in \bfield$, then $\mathfrak{p} (\sum_{i=0}^m z^ia_{ij})_{j=0}^{n-1} = \sum_{j=0}^{n-1}(\sum_{i=0}^m z^ia_{ij})v_j = \sum_{i=0}^m z^i (\sum_{j=0}^{n-1} a_{ij}v_j)$.

\begin{definition}[\cite{Lopez/Szabo:2013}]
An \emph{ideal code} is an $\bfield [z]$--submodule direct summand $C$ of $\bfield [z]^n$ (that is, a convolutional code) such that $\mathfrak{p} (C)$ is a left ideal of $R$. 
\end{definition}

\begin{remark}
Once fixed the isomorphism $\mathfrak{p}$ (that is, the basis $B$), an ideal code is equivalently given by a left ideal $I$ of $R$ such that $\mathfrak{p}^{-1}(I)$ is an $\bfield [z]$--direct summand of $\bfield [z]^n$. We may thus understand that an ideal code is just a left ideal $I$ of $R$ which is an $\bfield [z]$--direct summand of $R$. We get from Corollary \ref{cor:SRsplit} the following proposition.
\end{remark}

\begin{proposition}\label{idealsumadirecta}
If $\bfield [z] \subseteq A[z;\sigma,\delta]$ is a separable ring extension, then every ideal code is a direct summand of $A[z;\sigma,\delta]$ as a left ideal and, hence, it is generated by an idempotent of $A[z;\sigma,\delta]$. 
\end{proposition}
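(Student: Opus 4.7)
The plan is essentially a direct invocation of Corollary \ref{cor:SRsplit} once the correct translation between convolutional and ring-theoretic language is made. The remark preceding the proposition already identifies an ideal code with a left ideal $I$ of $R = A[z;\sigma,\delta]$ that is an $\bfield[z]$--direct summand of $R$ (via the $\bfield[z]$--linear isomorphism $\mathfrak{p}:\bfield[z]^n \to R$, which carries $\bfield[z]$--direct summands of $\bfield[z]^n$ bijectively onto $\bfield[z]$--direct summands of $R$). So the first move is to record this reformulation and take $S = \bfield[z]$.

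Next, I would apply the hypothesis: $\bfield[z] \subseteq R$ is a separable ring extension, so there is a separability element $p \in R \otimes_{\bfield[z]} R$. Let $I$ be an ideal code, fitting into an $\bfield[z]$--split short exact sequence
\[
\xymatrix{0 \ar[r] & I \ar[r] & R \ar^-{\pi}[r] & R/I \ar[r] \ar@/^/[l]^{\iota} & 0}
\]
with $\iota$ a splitting of $\pi$ as $\bfield[z]$--modules. Corollary \ref{cor:SRsplit} applies verbatim in this setting and yields that the same sequence splits as a sequence of left $R$--modules, with splitting $\beta(r+I) = \sum_i a_i \iota(b_i r + I)$ (where $p = \sum_i a_i \otimes b_i$). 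Consequently, $I$ is a left $R$--direct summand of $R$, and the idempotent generator is $e = 1 - f$ with $f = \beta(1+I) = \sum_i a_i \iota(b_i+I)$, giving $I = Re$.

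The main conceptual content is therefore entirely contained in Corollary \ref{cor:SRsplit}; the only thing to verify in the present proposition is that "ideal code" is the correct instance of the abstract hypothesis, which is precisely what the preceding remark records. There is no substantive obstacle: the proof is a one-line application, and its value lies in the fact that Corollary \ref{cor:SRsplit} produces $e$ \emph{constructively} out of the separability element $p$ and any $\bfield[z]$--linear splitting $\iota$. This constructive aspect is what will be exploited in the algorithm of Section \ref{algoritmo}.
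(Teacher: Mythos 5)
Your proposal is correct and matches the paper exactly: the paper gives no separate proof, stating only that the proposition follows from Corollary \ref{cor:SRsplit} once the preceding remark identifies an ideal code with a left ideal of $R=A[z;\sigma,\delta]$ that is an $\bfield[z]$--direct summand of $R$. Your write-up simply makes that one-line derivation explicit, including the formula for the idempotent, which is precisely how the paper intends it.
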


Our next goal is to show that Proposition \ref{idealsumadirecta} can be applied to many examples, both previously considered by other authors, as well as introduced here for the first time. Let us start with group convolutional codes. 

\begin{definition}\cite{Estradaetal:2008}
Let $\bfield\mathcal{G}$ denote the group algebra of a finite group $\mathcal{G}$. Let \(\sigma \in \Aut{\bfield\mathcal{G}}\) and \(\delta \in \Der{\bfield\mathcal{G}}{\sigma}\). A group convolutional code is an ideal code in \(\bfield\mathcal{G}[z;\sigma,\delta]\).
\end{definition}

We get, as a consequence of our general results, the following proposition. An explicit separability element is provided in its proof, which will allow to apply Algorithm \ref{idem} for computing an idempotent generator of each group convolutional code. 

\begin{proposition}\cite[Proposition 3.6]{Lopez/Szabo:2013}\label{groupdirectsummand}
Let \(\mathcal{G}\) be a finite group such that \((|\mathcal{G}|,\operatorname{char}\bfield) = 1\), let \(\sigma \in \Aut{\bfield \mathcal{G}}\) and \(\delta \in \Der{\bfield\mathcal{G}}{\sigma}\) such that \(\sigma(\mathcal{G}) = \mathcal{G}\) and \(\delta(\mathcal{G}) = 0\). Then each group convolutional code is a direct summand of \(R = \bfield\mathcal{G}[z;\sigma,\delta]\) as a left ideal over \(R\), and, hence, generated by an idempotent of $R$. 
\end{proposition}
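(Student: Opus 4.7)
The plan is to reduce the statement to Proposition \ref{idealsumadirecta} by showing that the ring extension $\bfield[z] \subseteq R = \bfield\mathcal{G}[z;\sigma,\delta]$ is separable. By Theorem \ref{separabletoOre}, this reduces further to exhibiting a separability element $p \in \bfield\mathcal{G}\otimes_{\bfield}\bfield\mathcal{G}$ of the trivial extension $\bfield \subseteq \bfield\mathcal{G}$ satisfying the two invariance conditions $\sigma^\otimes(p) = p$ and $\delta^\otimes(p) = 0$.

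The obvious candidate, which is well defined because the hypothesis $(|\mathcal{G}|,\operatorname{char}\bfield)=1$ makes $|\mathcal{G}|$ invertible in $\bfield$, is the classical Maschke-type element
\[
p \;=\; \frac{1}{|\mathcal{G}|}\sum_{g \in \mathcal{G}} g \otimes g^{-1}.
\]
I would first verify the defining properties of a separability element: for each $h\in \mathcal{G}$, the substitution $g \mapsto h^{-1}g$ in the sum gives $hp = ph$, and this extends to arbitrary elements of $\bfield\mathcal{G}$ by $\bfield$-linearity; also $\mu(p) = \frac{1}{|\mathcal{G}|}\sum_g gg^{-1} = 1$ by direct count.

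Next I would check the two compatibility conditions with $\sigma$ and $\delta$. Because $\sigma$ is an algebra automorphism with $\sigma(\mathcal{G}) = \mathcal{G}$, it restricts to a bijection of $\mathcal{G}$ and satisfies $\sigma(g^{-1}) = \sigma(g)^{-1}$; therefore reindexing the sum by $g \mapsto \sigma^{-1}(g)$ yields $\sigma^\otimes(p) = p$. The hypothesis $\delta(\mathcal{G}) = 0$ makes every summand of $\delta^\otimes(p) = \frac{1}{|\mathcal{G}|}\sum_g (\delta(g)\otimes \sigma(g^{-1}) + g \otimes \delta(g^{-1}))$ vanish, so $\delta^\otimes(p) = 0$.

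With these invariances in hand, Theorem \ref{separabletoOre} produces a separability element $\overline{p} = \varphi(p)$ for $\bfield[z] \subseteq R$, and Proposition \ref{idealsumadirecta} closes the argument, with the idempotent generator furnished explicitly by Corollary \ref{cor:SRsplit}. I do not anticipate a genuine obstacle: the two hypotheses on $\sigma$ and $\delta$ in the statement are tailor-made so that the classical group-algebra separability element is transported without modification through the Ore-extension lifting of Theorem \ref{separabletoOre}; the only care needed is to keep track of the coefficient $\tfrac{1}{|\mathcal{G}|}$ and of the identity $\sigma(g^{-1}) = \sigma(g)^{-1}$.
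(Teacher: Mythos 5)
Your proposal is correct and follows exactly the paper's own argument: the same separability element $p = |\mathcal{G}|^{-1}\sum_{g\in\mathcal{G}} g\otimes g^{-1}$, the same invariance checks $\sigma^\otimes(p)=p$ and $\delta^\otimes(p)=0$, and the same reduction via Theorem \ref{separabletoOre} and Proposition \ref{idealsumadirecta}. You simply spell out the verifications that the paper labels as ``easily checked.''
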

\begin{proof}
In view of Proposition \ref{idealsumadirecta}, we only need to prove that the ring extension $\bfield [z] \subseteq A[z;\sigma,\delta]$ is separable with $A = \bfield \mathcal{G}$. It is easily checked that 
\(p = |G|^{-1} \sum_{g \in \mathcal{G}} g \otimes g^{-1} \in \bfield\mathcal{G} \otimes_{\bfield} \bfield\mathcal{G}\) is a separability element for the extension \(\bfield \subseteq \bfield\mathcal{G}\) such that $\sigma^\otimes(p) = p$ and $\delta^\otimes(p) = 0$. By Theorem \ref{separabletoOre}, the extension $\bfield [z] \subseteq R$ is separable, with separability element
\begin{equation}\label{sepele1}
\overline{p} = |G|^{-1}\sum_{g \in \mathcal{G}}g \otimes_{\bfield[z]} g^{-1}
\end{equation}
which finishes the proof. 
\end{proof}

 The next type of examples arises when the finite semisimple algebra $A$ is assumed to be commutative (which, in particular, include the cyclic convolutional codes from \cite{GluesingSchmale:2004}). We should first do the following remark. 

\begin{remark}
In \cite[Theorem 3.5]{Lopez/Szabo:2013} it is claimed that if $A$ is a commutative finite semisimple $\bfield$--algebra, then every ideal code of $A[z;\sigma,\delta]$ is an $A[z;\sigma,\delta]$--direct summand. However, the proof of \cite[Lemma 3.3]{Lopez/Szabo:2013}, needed to derive \cite[Theorem 3.5]{Lopez/Szabo:2013}, has a gap at line 20 of page 962, as the following example shows: 
Consider $A = \bfield \times \bfield$, where $\bfield = \field[2]$ is the field with two elements. Let $\sigma : A \to A$ be the $\bfield$--algebra automorphism defined by $\sigma(a,b) = (b,a)$ for all $(a, b) \in A$, and consider the $\sigma$-derivation $\delta : A \to A$ defined by $\delta (a,b) = (a+b,0)$ for $(a,b) \in A$. Consider the ring $R = A[z;\sigma,\delta]$. The Ore polynomial $\alpha(z) = z(1,0) + (1,1) \in R$ generates a left ideal $I = R\alpha$. Let $\pi : R \to I$ be defined as follows. Every $h(z) \in R$ has a unique expression $h(z) = h_1(z) (1,0) + h_2(z)(0,1)$ for $h_1(z), h_2(z) \in \bfield [z]$. Define $\pi (h(z)) = (h_1(z) + zh_2(z))\alpha (z)$. Clearly, $\pi$ is a homomorphism of $\bfield [z]$--modules. Moreover, 
\[
h(z)\alpha(z) = (h_1(z) (1,0) + h_2(z)(0,1))\alpha (z) = h_2(z)\alpha(z),
\]
since $(1,0)\alpha(z) = 0$ and $(0,1)\alpha(z) = \alpha(z)$. Therefore,
\[
\begin{array}{lcl}
\pi(h(z)\alpha(z)) & = & \pi (h_2(z)\alpha(z)) \\
 &=& \pi (h_2(z)(z(1,0) + (1,1))) \\
 &=& \pi ((h_2(z)z + h_2(z))(1,0) + h_2(z)(0,1)) \\
 & = & (h_2(z)z + h_2(z) + zh_2(z))\alpha(z) \\
 & = & h_2(z) \alpha (z) \\
 & = & h(z)\alpha (z)
\end{array}
\]
According to the claim at the line 20 of the proof of \cite[Lemma 3.3]{Lopez/Szabo:2013}, one should have that $\pi (0,1) \in (0,1)A[z;\sigma,\delta]$. This would imply that $(0,1)\pi(0,1) = \pi(0,1)$. However, 
\[
\pi(0,1) = z\alpha(z) = z(z(1,0) + 1) = z^2(1,0) + z \neq 0,
\]
while
\[
(0,1)\pi(0,1) = (0,1)z\alpha(z) = (z(1,0) + (1,0))\alpha(z) = 0.
\]
Therefore, $\pi (0,1) \notin (0,1)A[z;\sigma,\delta]$. 
\end{remark}

The proof of \cite[Lemma 3.3]{Lopez/Szabo:2013}, and so that of \cite[Theorem 3.5]{Lopez/Szabo:2013}, is correct in the case $\delta = 0$. We also obtain this result as a consequence of our general methods. In addition, our proof provides a separability element that allows the computation of an idempotent generator of any ideal code in this setting, according to the algorithm described in Section \ref{algoritmo}.

\begin{proposition}\cite[Theorem 3.5]{Lopez/Szabo:2013}\label{Aconm}
Let $A$ be any finite semisimple commutative $\bfield$--algebra $A$, and $\sigma$ an $\bfield$--automorphism of $A$. Then every ideal code of $R=A[z;\sigma]$ is a direct summand left ideal of $R$ and, consequently, it is generated by an idempotent element of $R$.
\end{proposition}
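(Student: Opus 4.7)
The plan is to reduce the statement to Proposition~\ref{idealsumadirecta} by showing that the ring extension $\bfield[z]\subseteq A[z;\sigma]$ is separable. By Theorem~\ref{separabletoOre} applied with $B=\bfield$ and $\delta=0$, it suffices to produce a separability element $p\in A\otimes_{\bfield}A$ of $\bfield\subseteq A$ satisfying $\sigma^{\otimes}(p)=p$; the condition $\delta^{\otimes}(p)=0$ is automatic. Once such a $p$ is constructed, $\overline{p}=\varphi(p)$ is a separability element of $\bfield[z]\subseteq A[z;\sigma]$, and Proposition~\ref{idealsumadirecta} concludes.

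To construct $p$, first note that since $\bfield$ is a finite (hence perfect) field, the finite semisimple commutative $\bfield$--algebra $A$ is separable. Writing the complete set of central idempotents as $\{e_1,\dots,e_n\}$, each simple component $Ae_i$ is a finite field extension of $\bfield$. The automorphism $\sigma$ permutes these idempotents, so I would decompose $\{1,\dots,n\}=\bigcup_{j=1}^{t}Z_j$ into orbits under $\overline{\sigma}$, order each orbit $Z_j=\{j_1,\dots,j_{m_j}\}$ cyclically and, as in Theorem~\ref{separableautos}, denote by $\sigma_{j_k}:Ae_{j_k}\to Ae_{j_{k+1}}$ the restrictions of $\sigma$. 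Then $\tau_j=\sigma_{j_{m_j}}\circ\cdots\circ\sigma_{j_1}$ is an $\bfield$--automorphism of the finite field $Ae_{j_1}$, hence a power of its Frobenius automorphism.

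Now I would invoke Example~\ref{separablefinitefield}: for any power of the Frobenius, a pair of normal dual bases yields a separability element $p_j\in Ae_{j_1}\otimes_{\bfield} Ae_{j_1}$ with $\tau_j^{\otimes}(p_j)=p_j$. Feeding the $p_j$ into Theorem~\ref{separableautos} produces the desired element
\[
p=\sum_{j=1}^{t}p_j+\sum_{j=1}^{t}\sum_{i=1}^{m_j-1}\sigma_{j_i}^{\otimes}\circ\cdots\circ\sigma_{j_1}^{\otimes}(p_j)\in A\otimes_{\bfield}A,
\]
a separability element of $\bfield\subseteq A$ satisfying $\sigma^{\otimes}(p)=p$. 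Combining this with Theorem~\ref{separabletoOre} gives the separability of $\bfield[z]\subseteq A[z;\sigma]$, and then Proposition~\ref{idealsumadirecta} yields the desired conclusion.

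The main obstacle is conceptual rather than technical: to realise that the hypothesis ``$A$ commutative finite semisimple'' is exactly what is needed so that every cyclic composition $\tau_j$ lives on a finite field, where Example~\ref{separablefinitefield} guarantees invariant separability elements. Once the orbit decomposition is in place, everything else is a direct application of the machinery already developed in Section~\ref{sec:2}.
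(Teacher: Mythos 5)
Your proposal is correct and follows essentially the same route as the paper: reduce to Proposition~\ref{idealsumadirecta} via Theorem~\ref{separabletoOre}, build a $\sigma^{\otimes}$-invariant separability element by decomposing $A$ into simple blocks (finite field extensions of $\bfield$), using normal dual bases as in Example~\ref{separablefinitefield} on each orbit representative, and assembling with Theorem~\ref{separableautos}. The paper additionally records the resulting separability element explicitly for later algorithmic use, but the argument is the same.
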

\begin{proof}
Again by Proposition \ref{idealsumadirecta}, we need only to argue that $\bfield [z] \subseteq R$ is separable. Let $\{e_1, \dots, e_n\}$ be idempotents such that $A = \oplus_{i=1}^nAe_i$ is a decomposition of $A$ into simple blocks. Since $A$ is commutative, $Ae_i$ is a finite field extension of $\bfield$ for all $i = 1, \dots, n$. We follow the notation of Theorem \ref{separableautos}. Thus, for each $j =1, \dots, t$, we have an $\bfield$--automorphism $\sigma_{j_{m_j}} \circ \cdots \circ \sigma_{j_2} \circ \sigma_{j_1}$ of the finite field $Ae_{j_1}$. By Example \ref{separablefinitefield}, there is a separability element $p_j \in Ae_{j_1} \otimes Ae_{j_1}$ such that $(\sigma_{j_{m_j}} \circ \cdots \circ \sigma_{j_2} \circ \sigma_{j_1})^\otimes (p_j) = p_j$. Theorem \ref{separableautos} gives then a separability element $p \in A \otimes A$ such that $\sigma^{\otimes}(p) = p$, and Theorem \ref{separabletoOre} shows that $\bfield [z] \subseteq R$ is a separable ring extension, with separability element
\begin{equation}\label{sepele2}
\overline{p} = \sum_{j=1}^t \sum_k a_{jk} \otimes_{\bfield[z]} b_{jk} + \sum_{j=1}^t \sum_{i=1}^{m_{j}-1} \sum_k \sigma_{j_i} \circ \cdots \circ \sigma_{j_1}(a_{jk}) \otimes_{\bfield[z]} \sigma_{j_i} \circ \cdots \circ \sigma_{j_1} (b_{jk}),
\end{equation}
where $\{a_{jk}\}, \{b_{jk}\}$ denote dual normal bases of $Ae_{j1}$ over $\bfield$ for all $j = 1, \dots, t$. 
\end{proof}

It is known that any (possibly non-commutative) finite semisimple $\bfield$--algebra $A$ is separable. In fact, it is a direct sum of finitely many matrix rings with coefficients in (finite) field extensions of $\bfield$. If $B = M_n(\Kfield)$ is one of these simple blocks, with $\Kfield$ a finite field extension of $\bfield$, then we know that the ring extensions $\bfield \subseteq \Kfield$ and $\Kfield \subseteq M_n(\Kfield)$ are separable (see Examples \ref{separablefinitefield} and \ref{explmat}). By \cite[Proposition 2.5]{Hirata/Sugano:1966}, the extension $\bfield \subseteq M_n(\Kfield)$ is separable. In this way, our method can be applied to get idempotent generators for ideal codes built from Ore extensions of the form $A[z;\sigma]$, with $\sigma$ an $\bfield$--automorphism of $A$. Actually, Theorem \ref{separabletoOre}, in conjunction with Theorem \ref{separableautos}, shows that, in order to get a separability element for the extension $\bfield[z] \subseteq A[z;\sigma]$, it suffices to find separability elements of extensions of the form $\bfield \subseteq M_n(\Kfield)$ invariant under $\omega^\otimes$, where $\omega$ is some $\bfield$--automorphism of $M_n(\Kfield)$. Although we do not know if such an idempotent does exist for any $\omega$ (and we think this question deserves further investigation), there are neat situations where it is the case. One of the simplest is the following.

\begin{theorem}\label{simplest}
Let $\tau$ be an automorphism of the matrix algebra $B = \mathcal{M}_n(\bfield)$ such that $\tau^m = \operatorname{id}_B$ for some $m \geq 1$. Consider the algebra $A = \oplus_{i=1}^m B_i$, where $B_i = B$ for all $i = 1, \dots, m$. Let $\sigma: A \to A$ be defined by
\[
\sigma (b_1, b_2, \dots, b_m) = (\tau (b_m), \tau(b_1), \dots, \tau (b_{m-1})),
\]
for all $(b_1, b_2, \dots, b_m) \in A$. Then every ideal code in $A[z;\sigma]$ is a left ideal direct summand and, consequently, it is generated by an idempotent of $A[z;\sigma]$. 
\end{theorem}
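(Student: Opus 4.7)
The plan is to verify that $\bfield[z] \subseteq A[z;\sigma]$ is a separable ring extension and then invoke Proposition \ref{idealsumadirecta}. Applying Theorem \ref{separabletoOre} with $\delta = 0$ (so the condition $\delta^\otimes(p) = 0$ is automatic), this reduces to producing a separability element $p \in A \otimes_{\bfield} A$ of the extension $\bfield \subseteq A$ satisfying $\sigma^\otimes(p) = p$. Theorem \ref{separableautos} in turn reduces \emph{this} to exhibiting, on a single block per $\overline{\sigma}$-orbit of central primitive idempotents, a separability element invariant under the composition of restricted isomorphisms around that orbit.

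The next step is to unravel what that amounts to here. Take the central primitive idempotents $e_i = (0, \dots, 0, 1_B, 0, \dots, 0) \in A$, with $1_B$ in the $i$-th slot, so that $Ae_i = B_i \cong B$. A direct computation from the defining formula for $\sigma$ gives $\sigma(e_i) = e_{i+1 \pmod m}$, so $\overline{\sigma}$ consists of a single orbit $Z_1 = \{1, 2, \dots, m\}$ on which it acts as the $m$-cycle $(1, 2, \dots, m)$. Unpacking the restriction-isomorphisms $\sigma_i : Ae_i \to Ae_{i+1}$ used in Theorem \ref{separableautos}, each $\sigma_i$ corresponds under the identification $Ae_i \cong B$ to $\tau : B \to B$; consequently, the composition $\sigma_m \circ \cdots \circ \sigma_2 \circ \sigma_1$ corresponds to $\tau^m$, which by hypothesis equals $\operatorname{id}_B$.

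With this identification in hand, the invariance hypothesis of Theorem \ref{separableautos} becomes trivially satisfied: \emph{any} separability element $p_1 \in B \otimes_{\bfield} B$ of the central extension $\bfield \subseteq B = \mathcal{M}_n(\bfield)$ will be fixed by $(\sigma_m \circ \cdots \circ \sigma_1)^\otimes = \operatorname{id}$. Such an element is supplied by Example \ref{explmat} (specialized to $\bfield \subseteq \mathcal{M}_n(\bfield)$), for instance $p_1 = \sum_{i=1}^n E_{i1} \otimes E_{1i}$. Theorem \ref{separableautos} then assembles the $\sigma^\otimes$-invariant separability element $p = p_1 + \sum_{i=1}^{m-1} \sigma_i^\otimes \circ \cdots \circ \sigma_1^\otimes(p_1)$ for $\bfield \subseteq A$, Theorem \ref{separabletoOre} lifts it to a separability element $\overline{p}$ of $\bfield[z] \subseteq A[z;\sigma]$, and Proposition \ref{idealsumadirecta} finishes the argument. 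No real obstacle arises; the hypothesis $\tau^m = \operatorname{id}_B$ is precisely calibrated so that the composition around the unique $\overline{\sigma}$-orbit collapses to the identity, which is exactly what Theorem \ref{separableautos} needs.
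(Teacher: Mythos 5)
Your proposal is correct and follows essentially the same route as the paper's own proof: reduce to separability of $\bfield[z]\subseteq A[z;\sigma]$ via Proposition \ref{idealsumadirecta}, take the standard separability element $\sum_i E_{ij}\otimes E_{ji}$ of the matrix block from Example \ref{explmat}, observe that the composite of the restricted isomorphisms around the single $\overline{\sigma}$-orbit is $\tau^m=\operatorname{id}_B$ so the invariance hypothesis is automatic, and then assemble the invariant separability element (the paper cites Proposition \ref{separablecyclic} where you cite Theorem \ref{separableautos}, but with one orbit these coincide) before lifting with Theorem \ref{separabletoOre}. No substantive difference.
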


\begin{proof}
Proposition \ref{idealsumadirecta} says that we only need to check that $\bfield[z] \subseteq A[z;\sigma]$ is a separable ring extension. Let $e_k$ be the central idempotent of $A$ with zeroes in all its components except the $k$--th, whose entry is $1$. The elements of \(B_k\) can therefore be represented as \(b e_k\) where \(b \in B = \mathcal{M}_n(\bfield)\). Let $p_1 \in B_1 \otimes B_1 = \mathcal{M}_n(\bfield) e_1 \otimes \mathcal{M}_n(\bfield)e_1$ be any separability element of the matrix algebra, e. g.
\begin{equation}\label{explicitchoice}
p_1 = \sum_{i=1}^n E_{ij}e_1 \otimes E_{ji}e_1,
\end{equation}
for some $j \in \{ 1, \dots, n\}$. Following the notation of Proposition \ref{separablecyclic}, we have that \(\sigma_i(be_i) = \tau(b) e_{i+1}\), hence
\[
(\sigma_m \circ \cdots \circ \sigma_2 \circ \sigma_1)^{\otimes} (p_1) = (\tau^m)^\otimes (p_1) = p_1.
\]
By Proposition \ref{separablecyclic}, 
\[
p = p_1 + \sum_{i=1}^{m-1} \sigma_i^{\otimes} \circ \cdots \circ \sigma_1^{\otimes} (p_1)
\]
is a separability element of $\bfield \subseteq A$ such that $\sigma^\otimes(p) = p$. Theorem \ref{separabletoOre} concludes the proof. In particular, for any explicit choice of \(p_1\) as in \eqref{explicitchoice}, 
\begin{equation}\label{sepele3}
\overline{p} = \sum_{i=1}^n E_{ij}e_1 \otimes_{\bfield[z]} E_{ji}e_1 + \sum_{k =1}^{m-1} \sum_{i=1}^n \tau^{k}(E_{ij})e_{k+1} \otimes_{\bfield[z]} \tau^k(E_{ji})e_{k+1}, 
\end{equation}
is a separability element of the ring extension $\bfield [z] \subseteq A[z;\sigma]$.
\end{proof}

\begin{example}
Let \(\bfield = \field[4] = \field[2](\alpha)\), \(B = \mathcal{M}_2(\bfield)\) and \(A = B \oplus B\). Let \(\tau \in \Aut{B}\) be the inner automorphism associated to \(U = \left(\begin{smallmatrix}\alpha^2 & \alpha^2 \\ \alpha & \alpha^2 \end{smallmatrix}\right)\), i.e. \(\tau(b) = UbU^{-1}\). Since \(U^2 = I_2\) it follows that \(\tau^2 = \operatorname{id}_B\). Hence \(\sigma : A \to A\), defined by \(\sigma(b_1,b_2) = (\tau(b_2),\tau(b_1))\), fits the hypothesis of Theorem \ref{simplest}. Therefore, the element
\[
\begin{split}
p &= \left( \left(\begin{smallmatrix} 1 & 0 \\ 0 & 0 \end{smallmatrix}\right), \left(\begin{smallmatrix} 0 & 0 \\ 0 & 0 \end{smallmatrix}\right)\right) \otimes \left( \left(\begin{smallmatrix} 1 & 0 \\ 0 & 0 \end{smallmatrix}\right), \left(\begin{smallmatrix} 0 & 0 \\ 0 & 0 \end{smallmatrix}\right)\right) + \left( \left(\begin{smallmatrix} 0 & 0 \\ 1 & 0 \end{smallmatrix}\right), \left(\begin{smallmatrix} 0 & 0 \\ 0 & 0 \end{smallmatrix}\right)\right) \otimes \left( \left(\begin{smallmatrix} 0 & 1 \\ 0 & 0 \end{smallmatrix}\right), \left(\begin{smallmatrix} 0 & 0 \\ 0 & 0 \end{smallmatrix}\right)\right) \\
&\quad + \left( \left(\begin{smallmatrix} 0 & 0 \\ 0 & 0 \end{smallmatrix}\right), \left(\begin{smallmatrix} \alpha^2 & \alpha^2 \\ \alpha & \alpha \end{smallmatrix}\right)\right) \otimes \left( \left(\begin{smallmatrix} 0 & 0 \\ 0 & 0 \end{smallmatrix}\right), \left(\begin{smallmatrix} \alpha^2 & \alpha^2 \\ \alpha & \alpha \end{smallmatrix}\right)\right) + \left( \left(\begin{smallmatrix} 0 & 0 \\ 0 & 0 \end{smallmatrix}\right), \left(\begin{smallmatrix} \alpha^2 & \alpha^2 \\ \alpha^2 & \alpha^2 \end{smallmatrix}\right)\right) \otimes \left( \left(\begin{smallmatrix} 0 & 0 \\ 0 & 0 \end{smallmatrix}\right), \left(\begin{smallmatrix} \alpha & \alpha^2 \\ 1 & \alpha \end{smallmatrix}\right)\right)
\end{split}
\]
is a separability element satisfying \(\sigma^\otimes(p) = p\). It has been obtained from \eqref{sepele3} with \(j = 1\) in \eqref{explicitchoice}. It follows from Theorem \ref{separabletoOre} that \(\overline{p} = \varphi(p)\) is a separability element for the extension \(\bfield[z] \subseteq A[z;\sigma]\).
\end{example}

\section{Computation of the idempotent generator.}\label{algoritmo}
In this section \(A\) is a finite semisimple algebra over a finite field $\bfield$. Fix a basis \(\{v_0, \dots, v_{n-1}\}\) of \(A\) as an \bfield{}--vector space. Let \(\sigma \in \Aut{A}\) and \(\delta \in \Der{A}{\sigma}\), and consider $R = A[z;\sigma,\delta]$ the corresponding Ore extension. Assume that $\bfield [z] \subseteq A[z;\sigma,\delta]$ is a separable ring extension, and that a separability element 
\(
\overline{p} = \sum_i a_i \otimes_{\bfield [z]} b_i \in R \otimes_{\bfield [z]} R
\)
is given. Theorem \ref{separabletoOre} provides a way to obtain such an element from a suitable separability element for the extension $\bfield \subseteq A$. This idea has been applied to several specific situations in Section \ref{sec:idealcode}. 

We know from Proposition \ref{idealsumadirecta} that every ideal code given by a left ideal $I$ of $R$ must be a direct summand of $R$ and, hence, it is generated, as a left ideal of $R$, by an idempotent. Our aim is to describe an algorithm that computes this idempotent from a set of generators $G = \{g_0, \dots, g_{t-1}\}$ of $I$ as a left ideal of $R$, whenever the separability element \(\overline{p}\) of the extension $\bfield[z] \subseteq A[z;\sigma,\delta]$ is available. A method for constructing such an element in a wide class of examples is provided in Subsection \ref{sec:separableautos}. The explicit formula of a separability element is given for group codes \eqref{sepele1}, ideal codes over any commutative finite semisimple algebra, henceforth including CCC codes \eqref{sepele2}, and for ideal codes over some non-commutative finite semisimple algebras \eqref{sepele3}. 

 Let \(I\) be a left ideal of \(R\). We have an exact sequence of left $R$--modules
\[
\xymatrix{R^t \ar[r]^{\cdot G} & R \ar[r]^{\pi} & R/I \ar[r] & 0},
\]
where $\cdot G$ is the homomorphism of left $R$--modules defined by right multiplication by the column matrix $G$. We know that $\{v_0, \dots, v_{n-1} \}$ becomes a basis of the $\bfield [z]$--module $R$, and we have the isomorphism of $\bfield [z]$--modules 
\[
\xymatrix@R=1pt{\mathfrak{p}: \bfield [z]^n \ar[r] & A[z;\sigma,\delta] \\ 
(f_0(z), \dots, f_{n-1}(z)) \ar@{|->}[r] & \sum_{j=0}^{n-1}f_j(z)v_j}
\]
Its inverse is
\[
\xymatrix@R=1pt{\mathfrak{v}: A[z;\sigma,\delta] \ar[r] &\bfield[z]^n \\
\sum_i z^i f_i \ar@{|->}[r] & (\textstyle\sum_i z^i f_{i,0}, \dots, \sum_i z^i f_{i,n-1}),}
\]
where, for all \(i\), \(f_i = f_{i,0} v_0 + \dots + f_{i,n-1} v_{n-1}\). 
Then \(I\) is generated as an \(\bfield[z]\)--module by \(\{v_i g_j ~|~ 0 \leq i \leq n-1, 0 \leq j \leq t-1 \}\). Hence a generator matrix for \(I\) is 

\begin{equation}\label{circulantmatrix}
M(G) = 
\begin{pmatrix}
\mathfrak{v}(v_0 g_0) \\
\vdots \\
\mathfrak{v}(v_{n-1} g_0) \\
\hline 
\vdots \\
\hline
\mathfrak{v}(v_0 g_{t-1}) \\
\vdots \\
\mathfrak{v}(v_{n-1} g_{t-1}) \\
\end{pmatrix},
\end{equation}
and we obtain a commutative diagram of homomorphisms of $\bfield[z]$--modules with exact rows
\[
\xymatrixcolsep{4pc}\xymatrixrowsep{3pc}\xymatrix{
R^t \ar[r]^{\cdot G} \ar@/^/[d]^{\mathfrak{v}}& R \ar[r]^{\pi} \ar@/^/[d]^{\mathfrak{v}} & R/I \ar[r] \ar@{=}[d]& 0 \\
\bfield[z]^{tn} \ar@/^/[u]^{\mathfrak{p}} \ar[r]^{\cdot M(G)} & \bfield[z]^n \ar@/^/[u]^{\mathfrak{p}} \ar[r]^{\pi \mathfrak{p}} & R/I \ar[r] & 0 }
\]

The left ideal $I$ is an ideal code if and only if it is a $\bfield[z]$--direct summand of $R$, equivalently, if and only if the Smith canonical form $H$ of $M(G)$ is basic, that is, $H$ is the matrix of size $tn \times n$ given by
\[ H=\begin{pmatrix} I_k & 0 \\ 0& 0 \end{pmatrix},
\] 
where $k$ is the dimension of the code, and $I_k$ is the identity matrix of order $k$. Let $P$ and $Q$ be invertible matrices with coefficients in $\bfield[z]$ and suitable sizes such that $PM(G)Q = H$. Let also $V=\left(\begin{smallmatrix} 0 \\ I_{n-k} \end{smallmatrix}\right)$, and $V^{T}$ the transpose of $V$. Consider the commutative diagram of homomorphisms of $\bfield[z]$--modules 
\[
\xymatrixcolsep{4pc}\xymatrixrowsep{3pc}\xymatrix{\bfield[z]^{tn} \ar^-{\cdot M(G)}[r] & \bfield[z]^n \ar^-{\pi \mathfrak{p}}[r] \ar^-{\cdot Q}[d] \ar^-{h}[dr] & R/I \ar^-{\gamma}[d] \ar[r] & 0\\
\bfield[z]^{tn} \ar^-{\cdot P}[u] \ar^-{\cdot H}[r] & \bfield[z]^n \ar^-{\cdot V}[r] & \ar@/^/[l]^{\cdot V^T} \ar@/^/[lu]^{s} \bfield[z]^{n-k} \ar[r] & 0}
\]
with exact rows,
where $h$ is defined by the matrix $M_h = QV$, $s$ by the matrix $M_s = V^TQ^{-1}$, and $\gamma$ is uniquely determined by $h$, since $M(G)M_h = M(G)QV = P^{-1}HV = 0$. Moreover, $\gamma$ is an isomorphism because both $\cdot P$ and $\cdot Q$ are isomorphisms, and $s$ splits the epimorphism $h$. The last claim follows because $hs$ is given by the matrix $M_sM_h = V^TQ^{-1}QV = V^TV = I_{n-k}$. Define $\iota : R/I \to R$ by $\iota = \mathfrak{p} s \gamma$. Then
\[
\gamma \pi \iota = \gamma \pi \mathfrak{p} s \gamma = hs\gamma = \gamma,
\]
and $\pi \iota = id_{R/I}$, since $\gamma$ is an isomorphism. According to Corollary \ref{cor:SRsplit}, the homomorphism of left $R$--modules $\beta : R/I \to R$ defined as
\[
\beta (r + I) = \sum_i a_i \iota (b_ir + I)
\]
for all $r + I \in R/I$ splits $\pi$. In particular, $f = \beta(1+I)$ is an idempotent in $R$ which generates a complement of \(I\) in \(R\) and, since $\pi (1- f) = 0$, then $e = 1-f$ generates the left ideal $I$. Now, $f$, and, therefore, $e$, can be explicitly computed:
\[
\begin{split}
f &= \beta(1+I) \\
& = \sum_ia_i\mathfrak{p}s\gamma(b_i + I) \\
& = \sum_ia_i\mathfrak{p}s\gamma(\pi (\mathfrak{p} \mathfrak{v}(b_i) ) \\
& = \sum_ia_i\mathfrak{p}sh(\mathfrak{v}(b_i) ) \\
& = \sum_ia_i\mathfrak{p}(\mathfrak{v}(b_i) M_hM_s). 
\end{split}
\]
The above reasoning proves the correctness of the following algorithm.

\begin{algorithm}[H]
\caption{Computation of the generating idempotent}\label{idem}
\begin{algorithmic}[1]
\REQUIRE $G = \{g_0, \dots, g_{t-1}\} \subseteq R$ non-zero. \textbf{Assumption}. A separability element $\sum_i a_i \otimes_{\bfield[z]} b_i$ is provided for the ring extension $\bfield[z] \subseteq R$.
\ENSURE An idempotent $e\in R$ such that $Re=Rg_0 + \dots + Rg_{t-1}$, or zero if it does not exist.
\STATE{Compute the matrix $M(G)$}
\STATE{Compute the Smith form decomposition $H = PM(G)Q$}
\IF{$H$ is not basic}
\RETURN 0
\ENDIF
\STATE{$V\gets \left(\begin{smallmatrix} 0 \\ I_{n-k} \end{smallmatrix}\right)$, where $k=\mathrm{rank}(H)$} and $n=\mathrm{dim}_{\bfield}(A)$
\STATE{$M_h\gets QV$, $M_s\gets V^TQ^{-1}$, $M\gets M_hM_s$}
\STATE{Compute $f_i=\mathfrak{p}(\mathfrak{v}(b_i)\cdot M)$ for all $i$}
\STATE $f \gets \sum_i a_if_i$
\RETURN $1-f$
\end{algorithmic}
\end{algorithm}

The generating idempotent has the following application. If \(I \leq R\) is an ideal code generated by the idempotent \(e \in I\), then \(r \in I\) if and only if \(r(1-e) = 0\), i.e. \(I\) is the kernel of the morphism of left \(R\)--modules defined as right multiplication by \(1-e\). The matrix construction in \eqref{circulantmatrix} for \(G = \{1-e\}\) provides a parity check matrix for the code. We illustrate this construction in the following example.

\begin{example}\label{commutativebase}
Let us consider the finite field $\bfield{} =\field[4]=\field[2](\alpha)$ and $A=\bfield[x]/(x^5-1)$. Hence, since $x^5-1$ decomposes as the product $(x + 1) \cdot (x^{2} +\alpha x + 1) \cdot (x^{2} +\alpha^2 x + 1)$ in $\bfield[x]$, $A\cong K_0 \times K_1 \times K_2$, where
\begin{gather*}
K_0=\frac{\bfield[x]}{(x+1)}, 
K_1=\frac{\bfield[x]}{(x^{2} +\alpha x + 1)} \text{ and } 
K_2=\frac{\bfield[x]}{(x^2 + \alpha^2 x + 1)}.
\end{gather*}
Following \cite{GluesingSchmale:2004}, in order to find a non-block CCC, we need an automorphism $\sigma:A\to A$ which moves some of the isomorphic copies of the block subfields of $A$. In this case, we consider the isomorphisms 
\[
\begin{split}
\psi: K_1 &\longrightarrow K_2 \\
x &\longmapsto \psi(x) = \alpha^2 x+1 
\end{split}
\qquad
\begin{split}
\psi^{-1}: K_2 &\longrightarrow K_1 \\
x &\longmapsto \psi^{-1}(x) = \alpha x + \alpha
\end{split}
\]
Let $\tau:K_i\to K_i$ be the Frobenius automorphism, i.e, $\tau(a)=a^{4}$ for any $a\in K_i$, for $i=1,2$. Hence, we may consider the automorphism $\sigma:A\to A$ defined by 
\[
\sigma(x) \equiv \sigma(1,x,x) 
= (1,\psi^{-1}(x)^4,\psi(x)^4) 
\equiv x^4 + \alpha^2 x^3 + \alpha x^2 + x,
\]
by using Chinese Remainder Theorem. In order to calculate a separability element $\overline{p}$ of the extension $\field[4][z] \subseteq A[z;\sigma]$ we follow the procedure explained in the proof of Proposition \ref{Aconm}.
We have
\begin{itemize}
\item $\{1\}$ is a self-dual normal basis of $K_0$. 
\item $\{x,x^4\}$ and $\{\alpha x,(\alpha x)^4\}$ are normal dual bases for \(K_1\). 
\item Applying $\psi$, $\{\alpha^2 x+1,(\alpha^2 x+1)^4\}$ and $\{x+\alpha,(x+\alpha)^4\}$ are normal dual bases for $K_2$. 
\end{itemize}
By using Chinese Remainder Theorem, it is straightforward to calculate all these elements in $A$ and compute a separability element $\overline{p}$ according to \eqref{sepele2}:
\[
\begin{split}
\overline{p} &= (x^4 + x^3 + x^2 + x + 1) \otimes_{\field[4][z]} (x^4 + x^3 + x^2 + x + 1) \\
&\quad + (\alpha^2 x^4 + \alpha^2 x^3 + \alpha x^2 + \alpha) \otimes_{\field[4][z]} (x^4 + x^3 + \alpha^2 x^2 + \alpha^2) \\
&\quad + (\alpha x^3 + \alpha^2 x^2 + \alpha^2 x + \alpha) \otimes_{\field[4][z]} (\alpha^2 x^3 + x^2 + x + \alpha^2) \\
&\quad + (\alpha^2 x^4 + \alpha^2 x^2 + \alpha x + \alpha) \otimes_{\field[4][z]} (x^4 + x^2 + \alpha^2 x + \alpha^2) \\
&\quad + (\alpha x^4 + \alpha^2 x^3 + \alpha^2 x + \alpha) \otimes_{\field[4][z]} (\alpha^2 x^4 + x^3 + x + \alpha^2)
\end{split}
\]

Now let \(I\) be the left ideal generated by the Ore polynomial 
\[
g=z^2 (\alpha^2 x^{4} + \alpha x^{3} + \alpha x^{2} + \alpha^2 x) + z (x^{4} + x^3 + x^2 + x) + (\alpha^2 x^{4} + \alpha x^{3} + \alpha x^{2} + \alpha^2 x + 1).
\] 
One may compute \(M(g)\), which is called the $\sigma$-circulant matrix of $g$ in \cite{GluesingSchmale:2004},
\begin{footnotesize}
\[
M(g) = 
\left(\begin{matrix}
1 & \alpha^2 z^{2} + z + \alpha^2 & \alpha z^{2} + z + \alpha & \alpha z^{2} + z + \alpha & \alpha^2 z^{2} + z + \alpha^2 \\
\alpha^2 z^{2} + z + \alpha^2 & 1 & \alpha^2 z^{2} + \alpha^2 z + \alpha^2 & \alpha z^{2} + \alpha z + \alpha & \alpha z^{2} + \alpha
\\
\alpha z^{2} + z + \alpha & \alpha^2 z^{2} + \alpha^2 z + \alpha^2 & 1 & \alpha^2 z^{2} + \alpha^2 & \alpha z^{2} + \alpha z + \alpha
\\
\alpha z^{2} + z + \alpha & \alpha z^{2} + \alpha z + \alpha & \alpha^2 z^{2} + \alpha^2 & 1 & \alpha^2 z^{2} + \alpha^2 z + \alpha^2
\\
\alpha^2 z^{2} + z + \alpha^2 & \alpha z^{2} + \alpha & \alpha z^{2} + \alpha z + \alpha & \alpha^2 z^{2} + \alpha^2 z + \alpha^2 & 1
\end{matrix}\right),
\]
\end{footnotesize}
whose Smith form decomposition is $H = PM(g)Q$, where $H=\left(\begin{smallmatrix} I_3 & 0 \\ 0 & 0 \end{smallmatrix}\right)$ and 
\begin{footnotesize}
\[
Q = \left(\begin{matrix}
1 & \alpha^2 z^{2} + z + \alpha^2 & \alpha z^{3} + \alpha z^{2} + z + \alpha & z^{2} + \alpha^2 z + \alpha & \alpha^2 z^{3} + z^{2} + \alpha z \\
0 & 1 & \alpha z + \alpha^2 & \alpha z^{2} + 1 & z^{3} + z^{2} + z + 1 \\
0 & 0 & z + \alpha & \alpha z^{2} + z + 1 & z^{3} + \alpha z^{2} + \alpha z + \alpha^2 \\
0 & 0 & \alpha^2 z + \alpha & z^{2} + \alpha z + \alpha & \alpha^2 z^{3} + \alpha z^{2} + z + \alpha^2 \\
0 & 0 & 0 & 0 & 1
\end{matrix}\right).
\]
\end{footnotesize}

Therefore, $I$ is a $\sigma$-cycic convolutional code of dimension 3 and length 5. Following Algorithm \ref{idem}, the morphism $h$ and its section $s$ are given by the matrices
\begin{footnotesize}
\[
M_h = \left(\begin{matrix}
z^{2} + \alpha^2 z + \alpha & \alpha^2 z^{3} + z^{2} + \alpha z \\
\alpha z^{2} + 1 & z^{3} + z^{2} + z + 1 \\
\alpha z^{2} + z + 1 & z^{3} + \alpha z^{2} + \alpha z + \alpha^2 \\
z^{2} + \alpha z + \alpha & \alpha^2 z^{3} + \alpha z^{2} + z + \alpha^2 \\
0 & 1
\end{matrix}\right) \text{ and }
M_s = \left(\begin{matrix}
0 & 0 & \alpha^2 z + \alpha & z + \alpha & 0 \\
0 & 0 & 0 & 0 & 1
\end{matrix}\right)
\] 
\end{footnotesize}
Hence, $f_i=\mathfrak{p}(\mathfrak{v}(b_i)\cdot M)$, where
\begin{footnotesize}
\[
M = \left(\begin{matrix}
0 & 0 & \alpha^2 z^{3} + \alpha^2 & z^{3} + z^{2} + \alpha^2 z + \alpha^2 & \alpha^2 z^{3} + z^{2} + \alpha z \\
0 & 0 & z^{3} + \alpha^2 z^{2} + \alpha^2 z + \alpha & \alpha z^{3} + \alpha^2 z^{2} + z + \alpha & z^{3} + z^{2} + z + 1 \\
0 & 0 & z^{3} + z + \alpha & \alpha z^{3} + \alpha z^{2} + \alpha^2 z + \alpha & z^{3} + \alpha z^{2} + \alpha z + \alpha^2 \\
0 & 0 & \alpha^2 z^{3} + \alpha^2 z^{2} + \alpha z + \alpha^2 & z^{3} + z + \alpha^2 & \alpha^2 z^{3} + \alpha z^{2} + z + \alpha^2 \\
0 & 0 & 0 & 0 & 1
\end{matrix}\right)
\]
\end{footnotesize}
The parity check idempotent polynomial is $f=\sum_i a_if_i$. Concretely,
\[
\begin{split}
f &= z^3 (\alpha^2 x^{4} + \alpha^2 x^{3} + x^{2} + 1) + z^2 (\alpha x^{4} + x^{2} + \alpha^2 x) \\
&\quad + z (\alpha x^{4} + x^{3} + \alpha x^{2} + 1) + (\alpha^2 x^{4} + \alpha x^{3} + \alpha x^{2} + \alpha^2 x).
\end{split}
\]
and the output of Algorithm \ref{idem} is 
\[
\begin{split}
e &= z^3 (\alpha^2 x^{4} + \alpha^2 x^{3} + x^{2} + 1) + z^2 (\alpha x^{4} + x^{2} + \alpha^2 x) \\
&\quad + z (\alpha x^{4} + x^{3} + \alpha x^{2} + 1) + (\alpha^2 x^{4} + \alpha x^{3} + \alpha x^{2} + \alpha^2 x + 1).
\end{split}
\]
From the parity check polynomial \(f\), we can compute hence a parity check matrix,
\begin{scriptsize}
\begin{multline*}
M(f) = \\
\left(\begin{matrix}
z^{3} + z & \alpha^2 z^{2} + \alpha^2 & z^{3} +
z^{2} + \alpha z + \alpha & \alpha^2 z^{3} + z +
\alpha & \alpha^2 z^{3} + \alpha z^{2} + \alpha z +
\alpha^2 \\
z^{3} + \alpha z^{2} + z + \alpha^2 & \alpha^2
z^{3} + \alpha^2 z & \alpha^2 z^{3} +
\alpha^2 z^{2} + \alpha z + \alpha^2 & z^{3} +
z^{2} + \alpha^2 z + \alpha & \alpha^2
z + \alpha \\
\alpha^2 z^{3} + \alpha & z^{3} + \alpha z^{2} +
\alpha^2 & 0 & z^{3} + \alpha^2 z^{2} +
\alpha^2 z + \alpha^2 & \alpha^2
z^{3} + z^{2} + \alpha^2 z + \alpha \\
z^{2} + \alpha z + \alpha & z^{3} + \alpha^2 z +
\alpha & \alpha^2 z^{3} + \alpha z^{2} + z + \alpha +
1 & \alpha^2 z^{3} + \alpha^2 z &
z^{3} + \alpha^2 z^{2} + \alpha^2 z +
\alpha^2 \\
\alpha^2 z^{3} + \alpha^2 z^{2} + \alpha z
+ \alpha^2 & \alpha^2 z^{3} + z^{2} + \alpha &
z^{3} + z + \alpha & \alpha z^{2} + \alpha z + \alpha^2 &
z^{3} + z
\end{matrix}\right).
\end{multline*}
\end{scriptsize}

Following the techniques developed in \cite{Gluesing_MDS}, the degree of this code is \(\delta = 2\), hence it is a \((5,3,2)_4\) convolutional code. Then its free distance is bounded by \(5\). Actually, we may calculate the first terms of the column distances of $I$. Concretely, $d^c_0=3$, $d^c_1=4$, $d^c_2=5$. So, the free distance of $I$, $d_{\text{free}} (I)=5$ and it is an MDS code.
\end{example}

\begin{example}[Continuation of Example \ref{2x2mat}]\label{2x2matidemp}
Let \(\mathcal{B} = \left\{ \left(\begin{smallmatrix}
1 & 0 \\
0 & 0
\end{smallmatrix}\right), 
\left(\begin{smallmatrix}
0 & 1 \\
0 & 0
\end{smallmatrix}\right), 
\left(\begin{smallmatrix}
0 & 0 \\
1 & 0
\end{smallmatrix}\right), 
\left(\begin{smallmatrix}
0 & 0 \\
0 & 1
\end{smallmatrix}\right)\right\}\) be the chosen basis of \(\mathcal{M}_2(\field[8])[z;\sigma]\) as \(\field[8][z]\)--module. Let $I$ be the left ideal of $R=\mathcal{M}_2(\field[8])[z;\sigma]$ generated by $g$, where
\[
g=z^2\left(\begin{matrix}
\alpha^5 & \alpha^{6} \\
0 & 0
\end{matrix}\right) + z \left(\begin{matrix}
\alpha^{5} & \alpha^{4} \\
\alpha & 0
\end{matrix}\right)+ \left(\begin{matrix}
1 & 0 \\
\alpha^{6} & 0
\end{matrix}\right).
\]
Hence, 
\[
M(g)=\left(\begin{matrix}
\alpha^{6} z^{2} + \alpha^{5}z + 1 & z^{2} + \alpha^{5} z &
\alpha^{5} z^{2} + \alpha z &
\alpha^{6} z^{2} + \alpha z \\
\alpha^{2} z^{2} + \alpha^{6} & \alpha^{3} z^{2} +
\alpha^{4} z & \alpha z^{2} & \alpha^{2}
z^{2} + z \\
\alpha^{5} z^{2} + \alpha z &
\alpha^{6} z^{2} + \alpha z & \alpha^{2} z^{2} +
\alpha^{2} z + 1 & \alpha^{3} z^{2} + \alpha^{2} z \\
\alpha z^{2} & \alpha^{2} z^{2} + z & \alpha^{5} z^{2} + \alpha^{6} & \alpha^{6}
z^{2} + \alpha z
\end{matrix}\right)
\]
whose Smith form decomposition is $H=PM_gQ$, where
\[
H=\left(\begin{matrix}
1 & 0 & 0 & 0 \\
0 & 1 & 0 & 0 \\
0 & 0 & 0 & 0 \\
0 & 0 & 0 & 0
\end{matrix}\right),
Q=\left(\begin{matrix}
1 & \alpha^5 z & \alpha^4 z^{2} + \alpha^4 z & \alpha^{2} z^{3} + z^{2} + \alpha z \\
0 & \alpha^4 z + \alpha^{2} & \alpha^3 z^{2} + \alpha z + \alpha^{2} + 1 & \alpha z^{3} + \alpha z^{2} + \alpha^3 z \\
0 & \alpha & z & \alpha^5 z^{2} + \alpha^{2} z \\
0 & 0 & 0 & 1
\end{matrix}\right),
\]
\[
\text{and }
P=\left(\begin{matrix}
\alpha^{6} z + \alpha^{2} & \alpha^{3}
z + 1 & 0 & 0 \\
\alpha^{6} z &\alpha^{3} z &
\alpha^{6} & 0 \\
z^{2} + 1 & \alpha^{4} z^{2} +
\alpha^{6} z + \alpha & \alpha z & 0 \\
\alpha^{4} z^{3} + z^{2} + \alpha^{6} z & \alpha z^{3} + \alpha^{6} z^{2} +
\alpha^{3} z & \alpha^{5}
z^{2} + \alpha z + \alpha^{6} & 1
\end{matrix}\right).
\]
Therefore, $I$ is an ideal code of dimension 2 and length 4. Following Algoritm \ref{idem}, the morphism $h$ and its section $s$ are given by the matrices
\[
M_h=\left(\begin{matrix}
\alpha^{4} z^{2} + \alpha^{4} z & \alpha^{2} z^{3} + z^{2} + \alpha z \\
\alpha^{3} z^{2} + \alpha z + \alpha^{6} & \alpha z^{3} + \alpha z^{2} + \alpha^{3} z \\
z & \alpha^{5} z^{2} + \alpha^{2} z \\
0 & 1
\end{matrix}\right) \text{ and } 
M_s=\left(\begin{matrix}
0 & \alpha & \alpha^{4} z + \alpha^{2} & 0 \\
0 & 0 & 0 & 1
\end{matrix}\right).
\] 
Hence, $f_i = \mathfrak{p}(\mathfrak{v}(b_i)\cdot M)$, where
\[
M=\left(\begin{matrix}
0 & \alpha^{5} z^{2} + \alpha^{5} z & \alpha z^{3} + \alpha^{5} z^{2} +
 \alpha^{6} z & \alpha^{2} z^{3} + z^{2} + \alpha z \\
0 & \alpha^{4} z^{2} + \alpha^{2} z + 1 &
z^{3} + \alpha & \alpha z^{3} + \alpha z^{2} + \alpha^{3} z \\
0 & \alpha z & \alpha^{4} z^{2} +
\alpha^{2} z & \alpha^{5} z^{2} +
\alpha^{2} z \\
0 & 0 & 0 & 1
\end{matrix}\right).
\]
Now, the parity check idempotent polynomial is $f =\sum_ia_if_i$. Concretely,
\[
f=z^3\left(\begin{matrix}
\alpha^{6} & 1 \\
\alpha^{5} & \alpha^{6}
\end{matrix}\right) + z^2 \left(\begin{matrix}
\alpha^{3} & \alpha^{2} \\
\alpha^{2} & \alpha^{6}
\end{matrix}\right) + z\left(\begin{matrix}
\alpha^{4} & \alpha^{4} \\
1 & 0
\end{matrix}\right) + \left(\begin{matrix}
0 & 0 \\
\alpha^{6} & 1
\end{matrix}\right)
\]
and the generating idempotent of $I$ is
\[
e=1-f=z^3\left(\begin{matrix}
\alpha^{6} & 1 \\
\alpha^{5} & \alpha^{6}
\end{matrix}\right) + z^2 \left(\begin{matrix}
\alpha^{3} & \alpha^{2} \\
\alpha^{2} & \alpha^{6}
\end{matrix}\right) + z\left(\begin{matrix}
\alpha^{4} & \alpha^{4} \\
1 & 0
\end{matrix}\right) + \left(\begin{matrix}
1 & 0 \\
\alpha^{6} & 0
\end{matrix}\right).
\]
Again, we may calculate the first terms of the column and row distances of the ideal code $I$. Concretely, $d^c_0=1$, $d^c_1=3$, $d^c_2=4$ and $d^r_0=4$. Hence, by \cite{Gluesing_MDS}, the free distance of $I$ is $d_{\text{free}} (I)=4$.
\end{example}


 \end{document}